\ifdefined\XeTeXversion\else\pdfoutput=1\fi
\documentclass[aps,prx,reprint,superscriptaddress,nofootinbib,longbibliography,floatfix]{revtex4-2}
\usepackage{float}
\usepackage[T1]{fontenc}
\usepackage{amsmath,amssymb,mathtools,bm}
\usepackage{amsthm}
\usepackage{array,booktabs}
\usepackage{graphicx}
\usepackage[protrusion=true,expansion=false]{microtype}
\usepackage{tikz}
\usetikzlibrary{arrows.meta,positioning,calc,fit,decorations.pathreplacing}
\usepackage{xcolor}
\definecolor{apsblue}{HTML}{185FA5}
\usepackage[colorlinks=true,citecolor=apsblue,linkcolor=apsblue,urlcolor=apsblue]{hyperref}

\newcommand{\cref}[1]{\autoref{#1}}
\newcommand{\Cref}[1]{\autoref{#1}}
\newtheorem{theorem}{Theorem}

\newtheorem{lemma}[theorem]{Lemma}
\newtheorem{corollary}[theorem]{Corollary}

\newcommand{\ket}[1]{\lvert #1\rangle}
\newcommand{\bra}[1]{\langle #1\rvert}
\newcommand{\braket}[2]{\langle #1\vert #2\rangle}
\newcommand{\proj}[1]{\lvert #1\rangle\!\langle #1\rvert}
\newcommand{\Tr}{\operatorname{Tr}}
\newcommand{\id}{\mathbb I}

\newcommand{\CCZ}{\operatorname{CCZ}}
\newcommand{\CZ}{\operatorname{CZ}}
\newcommand{\eps}{\varepsilon}
\newcommand{\cZ}{c_Z}
\newcommand{\cCCZ}{c_{\rm CCZ}}
\newcommand{\bits}{\{0,1\}}
\newcommand{\norm}[1]{\left\lVert #1\right\rVert}

\newcommand{\expect}[1]{\left\langle #1\right\rangle}

\begin{document}

\title{Robust Device-Independent Certification of Boolean-Phase Gates}

\author{Yunguang Han}
\email{hanyunguang@nuaa.edu.cn}
\affiliation{College of Computer Science and Technology, Nanjing University of Aeronautics and Astronautics, Nanjing 211106, People's Republic of China}

\author{Xingyuan Bu}
\affiliation{College of Computer Science and Technology, Nanjing University of Aeronautics and Astronautics, Nanjing 211106, People's Republic of China}

\author{Aleksandra Go\v{c}anin}
\email{aleksandra.gocanin@ff.bg.ac.rs}
\affiliation{Faculty of Physics, University of Belgrade, Studentski Trg 12-16, 11000 Belgrade, Serbia}

\author{Jiabing Yuan}
\email{jbyuan@nuaa.edu.cn}
\affiliation{College of Computer Science and Technology, Nanjing University of Aeronautics and Astronautics, Nanjing 211106, People's Republic of China}

\begin{abstract}
Device-independent certification of a quantum gate requires the input and
output tests to identify the same reference qubits. Self-testing the
output Choi state alone does not guarantee this consistency. We develop a
robust certification scheme for Boolean-phase gates, a broad family of
computational-basis diagonal gates specified by Boolean functions. Boolean derivatives convert the target-dependent phase information into
classical signs that can be evaluated from local measurement outcomes.
This leads to Bell tests built from CHSH blocks using two binary
measurements per party and no entangling measurements. At maximal
violation, the tests self-test the normalized Choi state and the measured
observables. Away from the maximum, they give an explicit affine lower
bound on the extracted-state squared fidelity that is uniform over all
Boolean functions and valid in arbitrary local dimensions. We then
combine an identity test and a gate-output test in an independent-source
network in which the reference devices use the same physical observables,
obtaining a closed-form Choi-fidelity bound for an effective \(n\)-qubit
channel.  For CCZ, a joint six-party analysis gives a stronger robustness
bound without changing the Bell expression or measurement settings. The
construction shows how the algebraic structure of a gate can shift
target-dependent information from quantum measurement design to classical
processing of local outcomes.
\end{abstract}

\maketitle

\section{Introduction}
\label{sec:intro}
Device-independent certification aims to characterize quantum devices from
observed correlations without assuming a model of the underlying states,
measurements, or Hilbert-space dimensions. For a quantum gate, the
normalized Choi state determines the channel once the input and output
reference systems are fixed
~\cite{Jamiolkowski1972,Choi1975}.  Self-testing the output Choi state alone,
however, does not ensure that an input test and a gate-output test identify
the same ideal reference qubits.  The two tests may select different
ideal-qubit embeddings, in which case the resulting state statements do not
refer to one physical operation.  A self-test of the output Choi state is
therefore not, by itself, a self-test of the physical gate.

Robust self-testing provides state-and-measurement guarantees directly from
Bell correlations, beginning with CHSH and extending to multipartite and
network settings
~\cite{Bell1964,CHSH1969,Tsirelson1980,PopescuRohrlich1992,
MayersYao2004,SupicBowles2020,McKagueYangScarani2012,
YangVertesiBancal2014,Bancal2015,BampsPironio2015,Kaniewski2016,
WangWuScarani2016,McKague2011,Wu2014,Li2018,Li2020,Baccari2020,
Coladangelo2017,ReichardtUngerVazirani2013,BalanzoJuando2026,
Supic2023,Liu2026Scalable}.  With characterized preparations and
measurements, quantum gates can also be verified efficiently using local
operations, including multiqubit controlled-phase gates
~\cite{ZhuZhang2020,Zeng2020,Zhang2022}.  In the fully
device-independent setting, compatible state certificates can be combined
into statements about channels, instruments, and memories
~\cite{DallArno2017,Sekatski2018,Wagner2020,Sekatski2023,Bock2024}.
More recently, exact device-independent certification has been established
for arbitrary unitary gates in independent-source networks
~\cite{Sarkar2026}, and self-testing has been extended to quantum supermaps
~\cite{Chiribella2008,Barizien2026}.  These results establish exact gate
certification in broad generality.  Here we address a different question:
whether a nontrivial structured family of multiqubit gates admits robust
gate-level guarantees while retaining a simple local measurement scheme.

We consider Boolean-phase gates,
\[
U_f=\sum_{x\in\{0,1\}^n}(-1)^{f(x)}\proj{x},
\]
where \(f:\{0,1\}^n\to\{0,1\}\) is an arbitrary Boolean function.  Modulo a
global phase, these are precisely the computational-basis diagonal
reflections, or equivalently arbitrary products of multiqubit
controlled-\(Z\) operations.  The family contains the non-Clifford CCZ gate
and higher controlled-phase reflections relevant to fault-tolerant
constructions~\cite{Jones2013,Eastin2013}, and its associated phase states
are closely related to hypergraph states
~\cite{Rossi2013,Gachechiladze2016,ZhuHayashi2019,MillerMiyake2016,
Takeuchi2019,ChenYanZhou2024,Huang2024}.  The key algebraic observation is
that Boolean differentiation converts the gate-dependent phase into a
\(Z\)-diagonal sign:
\begin{equation}
U_f X_i U_f^\dagger
=
X_i D_i^f(\bm Z_{\bar i}),
\label{eq:conjX}
\end{equation}
where the eigenvalue of \(D_i^f\) can be computed from local
\(Z\)-measurement outcomes at the other parties.  The gate dependence can
therefore be incorporated into the classical Bell scoring rule without
introducing additional or entangling measurements.

Using this observation, we construct a Bell inequality for every
Boolean-phase gate using two binary measurements per party.  Maximal
violation self-tests the normalized Choi state and the tested observables,
while nonmaximal scores give an explicit affine lower bound on the
extracted-state squared fidelity that applies uniformly across the family
and in arbitrary local dimensions.  We then combine an identity-channel
test and a gate-output test in an independent-source network, with the same
reference observables used in both tests.  Together with the
channel-composition framework of Ref.~\cite{Sekatski2018}, the two observed
Bell scores give a closed-form lower bound on the Choi fidelity of an
effective \(n\)-qubit channel.  The number of global route-and-setting
configurations grows linearly with \(n\).  For CCZ, the same Bell experiment
admits a stronger robustness bound when the three extracted pairs are
treated jointly.

The gate-level composition itself uses the established framework of
Ref.~\cite{Sekatski2018}; the new ingredient is the Boolean-derivative Bell
construction and its robust realization for an entire family of multiqubit
phase gates.  In particular, changing the target Boolean function changes
the classical scoring rule rather than the local measurement settings.
Our earlier work~\cite{Han2026CCZState} considered exact self-testing of the
single three-qubit state \(\CCZ\ket{+}^{\otimes3}\).  The present work
instead treats the \(2n\)-party Choi states of arbitrary Boolean-phase
gates, derives realization-independent robustness bounds, and promotes the
state tests to gate-level certification.  Sections~\ref{sec:bell} and
~\ref{sec:robust} establish the exact and robust state results,
Sec.~\ref{sec:gate} gives the gate-level certification, and
Sec.~\ref{sec:ccz} treats CCZ in detail.

\section{Bell inequalities and exact self-testing}
\label{sec:bell}

We first construct the Bell expression used to certify Boolean-phase Choi
states and establish its exact self-testing property.  The Boolean
derivative turns the phase appearing in each \(X\)-type Choi stabilizer
into a sign that can be evaluated from local \(Z\)-measurement outcomes.
The resulting test therefore uses only binary local measurements and
classical postprocessing.  Robust self-testing is developed in
Sec.~\ref{sec:robust}, and the state tests are combined into a gate
certificate in Sec.~\ref{sec:gate}.

\subsection{Boolean-phase gates and Choi states}
\label{subsec:choi}

Let \(f:\bits^n\to\bits\) be a Boolean function and define the corresponding
Boolean-phase gate by
\begin{equation}
U_f
=
\sum_{x\in\bits^n}
(-1)^{f(x)}
\proj{x}.
\label{eq:Uf}
\end{equation}
Adding the constant \(1\) to \(f\) changes \(U_f\) only by a global phase,
so we set \(f(0^n)=0\) whenever convenient.  Writing \(f\) in algebraic
normal form,
\[
f(x)
=
a_{\varnothing}
\oplus
\bigoplus_{\varnothing\ne S\subseteq[n]}
a_S\prod_{j\in S}x_j,
\qquad
a_S\in\bits,
\]
shows that, up to a global phase, \(U_f\) is an arbitrary product of
single-qubit \(Z\) gates and multiqubit controlled-\(Z\) gates associated
with the nonconstant monomials.  Thus Boolean-phase gates are precisely
the computational-basis diagonal reflections.  Since the \(2^n-1\)
nonconstant monomial coefficients can be chosen independently, the family
contains \(2^{2^n-1}\) distinct \(n\)-qubit unitary channels.

The \(i\)th Boolean derivative is
\begin{equation}
\partial_i f(x_{\bar i})
=
f(x_i=0,x_{\bar i})
\oplus
f(x_i=1,x_{\bar i}),
\label{eq:derivative}
\end{equation}
which is independent of \(x_i\).  For local \(Z\) reflections, define
\begin{equation}
D_i^f(\bm Z_{\bar i})
=
\sum_{x_{\bar i}\in\bits^{n-1}}
(-1)^{\partial_i f(x_{\bar i})}
\prod_{j\ne i}
\frac{\id+(-1)^{x_j}Z_j}{2}.
\label{eq:Di}
\end{equation}
The projectors in Eq.~\eqref{eq:Di} are orthogonal and complete, so
\(D_i^f\) is a \(Z\)-diagonal Hermitian reflection.  Since \(U_f\) is
diagonal,
\begin{equation}
U_f Z_i U_f^\dagger
=
Z_i,
\label{eq:conjZ}
\end{equation}
while Eq.~\eqref{eq:conjX} gives
\[
U_f X_i U_f^\dagger
=
X_iD_i^f(\bm Z_{\bar i}).
\]

Let
\begin{equation}
\ket{\Phi_d}
=
2^{-n/2}
\sum_{x\in\bits^n}
\ket{x}_A\ket{x}_B,
\qquad
d=2^n,
\label{eq:Phi}
\end{equation}
and define the normalized Choi state
\begin{equation}
\begin{aligned}
\ket{\Psi_f}
&=
(\id_A\otimes U_{f,B})\ket{\Phi_d}
\\
&=
2^{-n/2}
\sum_{x\in\bits^n}
(-1)^{f(x)}
\ket{x}_A\ket{x}_B .
\end{aligned}
\label{eq:PsiF}
\end{equation}
For each \(i\), define
\begin{align}
S_i^Z
&=
Z_{A_i}Z_{B_i},
\label{eq:SZ}\\
S_i^X
&=
X_{A_i}X_{B_i}
D_i^f(\bm Z_{B_{\bar i}}).
\label{eq:SX}
\end{align}
These \(2n\) reflections are obtained by conjugating the \(2n\)
independent stabilizers of \(n\) EPR pairs by
\(\id_A\otimes U_{f,B}\).  Their commutativity and independence therefore
follow immediately, and their unique common \(+1\) eigenstate is
\(\ket{\Psi_f}\).  The Boolean derivatives that describe the action of
\(U_f\) on local \(X\) operators thus also determine the \(X\)-type
stabilizers of its Choi state.

\subsection{Bell inequalities from Boolean derivatives}
\label{subsec:belltest}

The Bell experiment involves the \(2n\) separated parties
\(A_1,\ldots,A_n,B_1,\ldots,B_n\).  Each party has two binary
measurements.  We denote the two measurements of \(A_i\) by
\(A_{i,0},A_{i,1}\), and those of \(B_i\) by
\(Z_{B_i},X_{B_i}\).

Consider the term associated with the pair \(A_iB_i\).  Party \(B_i\)
measures \(X\), while every other \(B_j\), \(j\ne i\), measures \(Z\).
Let \(\xi_i\in\{\pm1\}\) be the \(X\)-measurement outcome of \(B_i\), and
let \(z_j\in\{\pm1\}\) denote the \(Z\)-measurement outcomes of the other
\(B_j\) parties.  The verifier converts these outcomes to bits
\[
u_j=\frac{1-z_j}{2}
\]
and computes
\begin{equation}
d_i
=
(-1)^{\partial_i f(u_{\bar i})},
\qquad
r_{i,1}
=
\xi_i d_i.
\label{eq:classicalscore}
\end{equation}
For the complementary \(Z\)-measurement contribution, set
\(r_{i,0}=z_i\).  Thus \(r_{i,0}\) and \(r_{i,1}\) are postprocessed
\(B\)-side binary outputs obtained entirely from the separated parties'
local outcomes.  In particular, \(D_i^f\) does not represent an additional
measurement setting: its eigenvalue is evaluated classically, and no joint
or controlled-phase measurement is performed.  The Boolean-derivative
scoring procedure is summarized in Fig.~\ref{fig:scoring}.

\begin{figure}[t]
\centering
\includegraphics[width=\columnwidth]
{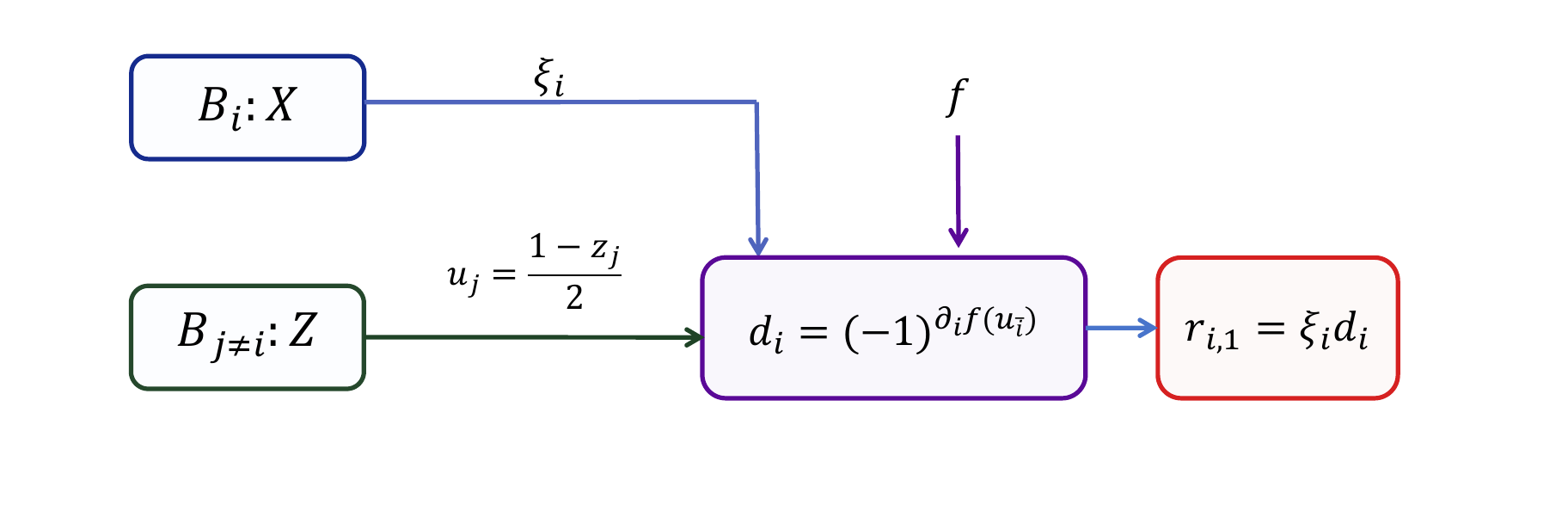}
\caption{
Boolean-derivative scoring for the \(i\)th Bell term.
Party \(B_i\) measures \(X\), while the remaining \(B_j\) parties measure
\(Z\).  The latter outcomes are mapped to bits
\(u_j=(1-z_j)/2\), from which the verifier evaluates the derivative sign
\(d_i=(-1)^{\partial_i f(u_{\bar i})}\).
Combining this sign with the \(X\)-measurement outcome \(\xi_i\) gives the
postprocessed output \(r_{i,1}=\xi_i d_i\).
}
\label{fig:scoring}
\end{figure}

We work in the tensor-product Bell model with unrestricted local
dimensions.  By purification and local Naimark dilation, it is sufficient
to consider a pure shared state \(\ket{\psi}\) and binary reflections.
For the operator analysis, the postprocessed outputs are represented by
the reflections
\begin{equation}
R_{i,0}
=
Z_{B_i},
\qquad
R_{i,1}
=
X_{B_i}
D_i^f(\bm Z_{B_{\bar i}}).
\label{eq:R}
\end{equation}
These operators are used only to analyze the Bell expression.
Operationally, all \(B_j\) parties remain spatially separated, and the
verifier combines their outputs only after the local measurements have
been performed.  Since \(D_i^f\) acts only on \(B_j\) with \(j\ne i\), it
commutes with both \(X_{B_i}\) and \(Z_{B_i}\), and \(R_{i,1}\) is itself
a reflection.

The \(i\)th CHSH operator is
\begin{equation}
\mathcal C_i^f
=
A_{i,0}(R_{i,0}+R_{i,1})
+
A_{i,1}(R_{i,0}-R_{i,1}),
\label{eq:block}
\end{equation}
and we define
\begin{equation}
\mathcal B_f
=
\sum_{i=1}^n
\mathcal C_i^f.
\label{eq:Bell}
\end{equation}
Eqs.~\eqref{eq:classicalscore}--\eqref{eq:Bell} define a Bell expression
linear in the observed conditional probabilities.

A sufficient set of global measurement settings grows linearly with \(n\).
On the \(B\) side, one uses the all-\(Z\) setting and the \(n\) settings in
which one \(B_i\) measures \(X\) while all other \(B_j\) measure \(Z\).
Each of these \(n+1\) settings is paired with the two uniform \(A\)-side
settings in which all \(A_i\) choose measurement \(0\) or all choose
measurement \(1\).  For the single-\(X\) setting associated with \(i\),
only the outcome of \(A_i\) enters the corresponding score term; the other
\(A\)-side outcomes can be ignored.  The Choi-state Bell test therefore
requires
\begin{equation}
2(n+1)=2n+2
\label{eq:statecontexts}
\end{equation}
global measurement-setting configurations.  This count concerns distinct
global settings; it does not by itself determine the number of estimated
score terms or the finite-sample complexity.  The additional configurations
required for gate certification are described in Sec.~\ref{sec:gate}.

As a simple example, for \(n\ge2\) consider the \(n\)-qubit
multi-controlled phase gate
\begin{equation}
C^{n-1}Z
=
\id-2\proj{1^n}.
\label{eq:CnZ}
\end{equation}
Its phase function is \(f(x)=\prod_{j=1}^n x_j\), for which
\begin{equation}
D_i^f
=
\id
-
2
\prod_{j\ne i}
\frac{\id-Z_{B_j}}{2}
=
C^{n-2}Z_{B_{\bar i}},
\label{eq:DiCnZ}
\end{equation}
where \(C^0Z=Z\).  The corresponding classical sign is
\begin{equation}
d_i
=
(-1)^{\prod_{j\ne i}u_j}.
\label{eq:andscore}
\end{equation}
Thus even for a multiqubit controlled phase, the Bell score requires only
local outcomes and classical evaluation of the Boolean derivative.  More
generally, the Walsh expansion of \(D_i^f\) may contain as many as
\(2^{n-1}\) \(Z\)-correlator monomials, whereas direct outcome
postprocessing evaluates the derivative sign without estimating these
terms separately.  The classical cost of this evaluation still depends on
the chosen representation of \(f\).  For \(n=3\),
Eq.~\eqref{eq:DiCnZ} gives the pairwise controlled-\(Z\) reflections
associated with CCZ, which is considered in detail in
Sec.~\ref{sec:ccz}.

\subsection{Exact self-testing}
\label{subsec:exact}

We use the standard notion of self-testing up to local isometries and
arbitrary auxiliary systems~\cite{SupicBowles2020}.  A maximal Bell
violation self-tests the target state and the action of the tested
observables if every realization attaining the maximum admits local
isometries that extract the ideal state and reproduce the ideal observables
on that state.

\begin{theorem}[Bell bounds and exact self-testing]
\label{thm:bellexact}
For every Boolean function \(f\),
\begin{equation}
\beta_{\mathrm L}(\mathcal B_f)
=
2n,
\qquad
\beta_{\mathrm Q}(\mathcal B_f)
=
2\sqrt2\,n.
\label{eq:bounds}
\end{equation}
The quantum bound is attained by \(\ket{\Psi_f}\) with
\begin{equation}
A_{i,0}
=
\frac{Z_{A_i}+X_{A_i}}{\sqrt2},
\qquad
A_{i,1}
=
\frac{Z_{A_i}-X_{A_i}}{\sqrt2},
\label{eq:Aideal}
\end{equation}
and Pauli \(Z_{B_i},X_{B_i}\).  Moreover, every realization attaining the
quantum maximum admits one local isometry per physical party and an
auxiliary state such that
\begin{equation}
\Phi\ket{\psi}
=
\ket{\Psi_f}\otimes\ket{\mathrm{aux}}.
\label{eq:exactextract}
\end{equation}
On the state, the extracted \(B_i\) observables act as Pauli \(Z\) and
\(X\), while the extracted \(A_i\) observables act as the rotated Pauli
measurements in Eq.~\eqref{eq:Aideal}.
\end{theorem}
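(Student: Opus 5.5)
The proof splits into three parts: the local bound, the quantum bound with its attainment, and the self-testing statement. Only the last part needs real work.

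\emph{Local bound.} In a deterministic local model every party outputs fixed values \(\pm1\). The postprocessed outputs \(r_{i,0}=z_i\) and \(r_{i,1}=\xi_i d_i\) are then also fixed signs, because \(d_i\) is a deterministic function of the other \(B_j\) outcomes. So each block \(\mathcal C_i^f\) is an ordinary CHSH combination of four \(\pm1\) numbers and is at most \(2\). This gives \(\beta_{\mathrm L}\le 2n\), and all outputs equal to \(+1\) attains it. Convexity extends the bound to all local models.

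\emph{Quantum bound and attainment.} In the reflection picture of Eq.~\eqref{eq:R}, \(A_{i,0},A_{i,1}\) act on party \(A_i\) and commute with \(R_{i,0},R_{i,1}\), which are reflections on the \(B\) parties. The standard sum-of-squares identity
\[
2\sqrt2\,\id-\mathcal C_i^f=\tfrac{1}{\sqrt2}\Bigl[\bigl(A_{i,0}-\tfrac{R_{i,0}+R_{i,1}}{\sqrt2}\bigr)^2+\bigl(A_{i,1}-\tfrac{R_{i,0}-R_{i,1}}{\sqrt2}\bigr)^2\Bigr]
\]
holds because all four operators square to \(\id\). It gives \(\mathcal C_i^f\le 2\sqrt2\) and hence \(\beta_{\mathrm Q}\le 2\sqrt2 n\). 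For attainment, insert Eq.~\eqref{eq:Aideal}. Then \(\mathcal C_i^f=\sqrt2\,(Z_{A_i}R_{i,0}+X_{A_i}R_{i,1})=\sqrt2\,(S_i^Z+S_i^X)\), and this has expectation \(2\sqrt2\) on \(\ket{\Psi_f}\) since both are stabilizers (Eqs.~\eqref{eq:SZ}--\eqref{eq:SX}).

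\emph{Self-testing.} At the maximum every square above annihilates \(\ket\psi\). This gives
\[
A_{i,0}\ket\psi=\tfrac{R_{i,0}+R_{i,1}}{\sqrt2}\ket\psi,\qquad
A_{i,1}\ket\psi=\tfrac{R_{i,0}-R_{i,1}}{\sqrt2}\ket\psi.
\]
Squaring (using \(A_{i,k}^2=\id\)) yields \(\{R_{i,0},R_{i,1}\}\ket\psi=0\). Now \(R_{i,0}R_{i,1}=Z_{B_i}X_{B_i}D_i^f\), and \(D_i^f\) acts only on \(B_{\bar i}\), so it commutes with both \(Z_{B_i}\) and \(X_{B_i}\). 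Hence \(D_i^f\{Z_{B_i},X_{B_i}\}\ket\psi=0\), and unitarity of \(D_i^f\) gives \(\{Z_{B_i},X_{B_i}\}\ket\psi=0\). Symmetrically, set \(\tilde Z_{A_i}=(A_{i,0}+A_{i,1})/\sqrt2\) and \(\tilde X_{A_i}=(A_{i,0}-A_{i,1})/\sqrt2\). These satisfy
\[
\tilde Z_{A_i}\ket\psi=Z_{B_i}\ket\psi,\qquad \tilde X_{A_i}\ket\psi=R_{i,1}\ket\psi,
\]
and they anticommute on the state. Following the usual route, I would then replace \(\tilde Z_{A_i},\tilde X_{A_i}\) by the unitary parts of their polar decompositions, regularized on their kernels. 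These agree with \(\tilde Z_{A_i},\tilde X_{A_i}\) on the relevant vectors. With one qubit-ancilla SWAP isometry per party, the \(2n\) partial SWAP gadgets yield \(\Phi\ket\psi=\ket{\xi}\otimes\ket{\mathrm{aux}}\) for some \(2n\)-qubit state \(\ket\xi\), as in multipartite McKague-type arguments. The operator identities pushed through the isometry then show that the extracted observables act as Pauli operators on the state, as claimed.

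\emph{Main obstacle.} The hard step is showing that \(\ket\xi\) is stabilized by \(S_i^X\), i.e.\ \(X_{A_i}X_{B_i}D_i^f(\bm Z_{B_{\bar i}})\ket\xi=\ket\xi\), because \(D_i^f\) is a nonlinear function of several \(Z_{B_j}\). The plan is to expand \(D_i^f\) into products of the commuting reflections \(Z_{B_j}\), \(j\ne i\). On the state, a product \(Z_{B_{j_1}}\cdots Z_{B_{j_k}}\) can be traded party by party for \(Z_{A_{j_1}}\cdots Z_{A_{j_k}}\) (regularized), since the \(A\)- and \(B\)-side operators commute. This shows that \(D_i^f(\bm Z_{B_{\bar i}})\) behaves under the SWAP isometry exactly as the Pauli \(D_i^f\) on the extracted qubits. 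The physical identities \(\tilde Z_{A_i}Z_{B_i}\ket\psi=\ket\psi\) and \(\tilde X_{A_i}X_{B_i}D_i^f\ket\psi=\ket\psi\) then transfer to \(S_i^Z\ket\xi=\ket\xi\) and \(S_i^X\ket\xi=\ket\xi\). Because \(\ket{\Psi_f}\) is the unique common \(+1\) eigenstate of these \(2n\) stabilizers, we conclude \(\ket\xi=\ket{\Psi_f}\).
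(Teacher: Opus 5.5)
Your local and quantum bounds, the attainment, and the derivation of \(\{Z_{B_i},X_{B_i}\}\ket\psi=0\) from \(\{R_{i,0},R_{i,1}\}\ket\psi=0\) match the paper. For the self-test itself you take a different route, a stabilizer-uniqueness argument. The paper instead decomposes \(\ket\psi\) into matched sectors \(\ket{\psi_x}=P_x\ket\psi\), proves the edge relation Eq.~\eqref{eq:edgeflip} for \(Y_i=X_{A_i}X_{B_i}\), and telescopes the derivative signs along paths in the Boolean cube. The SWAP expansion then collapses explicitly to \(\ket{\Psi_f}\otimes(-1)^{f(0^n)}2^{n/2}\ket{\psi_{0^n}}\).

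Your route works, but as written one step is misordered. The SWAP gadgets do not by themselves give \(\Phi\ket\psi=\ket\xi\otimes\ket{\mathrm{aux}}\). The branch vectors \(\prod_vX_v^{y_v}P_v^{y_v}\ket\psi\) attached to the labels \(y\in\bits^{2n}\) are a priori unrelated, and decoupling is exactly what has to be proved. The fix uses your own ingredients. Show that \(\Phi\ket\psi\) itself is a \(+1\) eigenvector of \(S_i^Z\otimes\id_{\mathrm{aux}}\) and \(S_i^X\otimes\id_{\mathrm{aux}}\), with Paulis on the extracted qubits, for every \(i\). Since \(\ket{\Psi_f}\) spans the common \(+1\) eigenspace of these \(2n\) independent commuting reflections on \(2n\) qubits, the common eigenspace on qubits\(\otimes\)aux is \(\ket{\Psi_f}\otimes\mathcal H_{\mathrm{aux}}\). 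The product form is then a conclusion rather than an assumption.

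Your \emph{main obstacle} is also misplaced. Because \(\Phi_{B_j}\) in Eq.~\eqref{eq:swap} is built from the physical reflection \(Z_{B_j}\), one has \(\Phi_{B_j}Z_{B_j}=\sigma_z^{(B_j)}\Phi_{B_j}\) as an exact operator identity. Hence \(\Phi\,D_i^f(\bm Z_{B_{\bar i}})=D_i^f(\bm\sigma_z^{(B_{\bar i})})\,\Phi\), with no need to trade \(Z_B\) for \(Z_A\).

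The step that does need care is \(X_{B_i}\), which anticommutes with \(Z_{B_i}\) only on the state. The relation \(\Phi_{B_i}X_{B_i}\ket\chi=\sigma_x\Phi_{B_i}\ket\chi\) holds whenever \(\{Z_{B_i},X_{B_i}\}\ket\chi=0\). You must apply it to \(\ket\chi=D_i^f\ket\psi\), after the other parties' isometries have acted. This is legitimate because \(\{Z_{B_i},X_{B_i}\}\) commutes with everything acting on other parties.

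With these two corrections your argument is complete. It is shorter than the paper's, and it offloads the global phase consistency onto the already established fact that \(S_i^Z,S_i^X\) are \(U_f\)-conjugates of EPR stabilizers. The paper's route makes that consistency explicit through path independence of the telescoped Boolean derivatives. It also computes the sector norms \(2^{-n/2}\) and identifies the auxiliary state concretely.
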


For a deterministic fully local strategy, the local response functions
predetermine every postprocessed output \(r_{i,t}\).  For each fixed hidden
variable, the usual deterministic CHSH algebra therefore gives
\(\mathcal C_i^f\le2\).  The assignment
\(A_{i,0}=A_{i,1}=Z_{B_i}=1\) for every \(i\) attains \(2\) in each term,
independently of the remaining outputs and derivative signs.  Hence
\[
\beta_{\mathrm L}(\mathcal B_f)=2n.
\]

For the quantum bound, the \(B\) systems may be regarded as one composite
Hilbert space solely for the purpose of the operator estimate.
Since \(R_{i,0}\) and \(R_{i,1}\) are reflections and commute with the
\(A_i\) observables, Tsirelson's CHSH bound gives
\[
\mathcal C_i^f
\preceq
2\sqrt2\,\id.
\]
This grouping is used only for the operator bound and does not alter the
underlying \(2n\)-party Bell experiment.  For the ideal measurements in
Eq.~\eqref{eq:Aideal},
\begin{equation}
\mathcal C_i^f
=
\sqrt2\left(S_i^Z+S_i^X\right),
\label{eq:idealblock}
\end{equation}
and the generalized stabilizer relations attain \(2\sqrt2\) in every term.
It follows that
\(\beta_{\mathrm Q}(\mathcal B_f)=2\sqrt2\,n\).

To establish the self-testing statement, consider the CHSH
sum-of-squares identity
\begin{align}
2\sqrt2\,\id-\mathcal C_i^f
=
\frac1{\sqrt2}\bigg[
&\left(
A_{i,0}
-
\frac{R_{i,0}+R_{i,1}}{\sqrt2}
\right)^2
\nonumber\\
+&
\left(
A_{i,1}
-
\frac{R_{i,0}-R_{i,1}}{\sqrt2}
\right)^2
\bigg].
\label{eq:CHSHSOS}
\end{align}
If \(\mathcal B_f\) attains its quantum maximum, every
\(\mathcal C_i^f\) must attain \(2\sqrt2\).  The standard CHSH
self-testing relations then give local reflections
\(Z_{A_i},X_{A_i}\) satisfying, on the state,
\begin{align}
Z_{A_i}\ket{\psi}
&=
Z_{B_i}\ket{\psi},
\label{eq:exactrelZ}\\
X_{A_i}\ket{\psi}
&=
X_{B_i}
D_i^f(\bm Z_{B_{\bar i}})
\ket{\psi},
\label{eq:exactrelX}\\
\{X_{A_i},Z_{A_i}\}\ket{\psi}
&=
0,
\qquad
\{X_{B_i},Z_{B_i}\}\ket{\psi}
=
0.
\label{eq:exactac}
\end{align}

These relations also show why the \(n\) CHSH conditions identify one
global Boolean-phase state rather than \(n\) unrelated pairs.  The first
relation restricts the state to sectors with matching computational labels
on \(A_i\) and \(B_i\).  The second fixes the relative phase between
neighboring matched sectors \(x\) and \(x\oplus e_i\) to
\((-1)^{\partial_i f(x_{\bar i})}\).  These edge relations are mutually
consistent because all of the signs are derivatives of the same Boolean
function.  Along any path from \(0^n\) to \(x\), their product is
\[
(-1)^{f(x)\oplus f(0^n)},
\]
independently of the chosen path.  The resulting phase pattern is therefore
exactly that of \(\ket{\Psi_f}\), and a product of local SWAP isometries
extracts the target state and the tested observables.
Appendix~\ref{app:exact} gives the complete proof, including the
regularization, sector decomposition, and observable-extraction steps.

This proves exact self-testing at maximal violation.  The next section
derives a quantitative fidelity bound for the same Bell expression away
from the maximum.

\section{Robust self-testing}
\label{sec:robust}

Section~\ref{sec:bell} established exact self-testing at the quantum
maximum of \(\mathcal B_f\).  We now derive a quantitative fidelity bound
from a nonmaximal Bell score.  Throughout, \(F\) denotes the root Uhlmann
fidelity,
\[
F(\rho,\sigma)
=
\Tr\sqrt{\sqrt{\rho}\sigma\sqrt{\rho}},
\]
so that for a pure target,
\[
F^2(\rho,\proj{\psi})
=
\bra{\psi}\rho\ket{\psi}.
\]
No restriction is placed on the physical local Hilbert-space dimensions.

Let
\begin{equation}
\eps
=
2\sqrt2\,n-\expect{\mathcal B_f}
\ge0
\label{eq:deficit}
\end{equation}
be the total Bell deficit.

The robust analysis differs from an ordinary pairwise CHSH test because
\[
R_{i,1}
=
X_{B_i}D_i^f(\bm Z_{B_{\bar i}})
\]
contains observables from the other \(B\) systems.  An extraction channel
that changes those observables would in general also change the derivative
reflection \(D_i^f\).  We therefore choose the \(B\)-side extraction so
that every physical \(Z_{B_j}\) is preserved exactly.  Since
\(D_i^f\) is built only from the spectral projectors of these
\(Z\) observables, every Boolean-derivative reflection is preserved as
well.  The same CHSH operator bound can then be applied independently of
the particular Boolean function.

\begin{theorem}[Robust self-testing of Boolean-phase Choi states]
\label{thm:robustselftest}
For every Boolean function \(f\) and every quantum realization of
\(\mathcal B_f\), there exist product local extraction channels such that
the extracted \(2n\)-qubit state \(\rho_{\rm ext}\) satisfies
\begin{equation}
F^2(\rho_{\rm ext},\proj{\Psi_f})
\ge
1-\cZ\eps,
\qquad
\cZ
=
\frac{3(1+\sqrt2)}8.
\label{eq:robustF}
\end{equation}
\end{theorem}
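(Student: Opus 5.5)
Our plan is to derive the bound from per-block CHSH robustness combined with a swap-type extraction on the \(B\) side that keeps every physical \(Z_{B_j}\) unchanged, as the remark before the theorem suggests.

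\emph{Step 1: split the deficit.} Write \(\eps_i=2\sqrt2-\expect{\mathcal C_i^f}\ge0\), so that \(\sum_i\eps_i=\eps\). Applying the sum-of-squares identity \eqref{eq:CHSHSOS} to each block bounds the two squared norms there:
\[
\Big\|\Big(A_{i,0}-\tfrac{R_{i,0}+R_{i,1}}{\sqrt2}\Big)\ket\psi\Big\|^2
+\Big\|\Big(A_{i,1}-\tfrac{R_{i,0}-R_{i,1}}{\sqrt2}\Big)\ket\psi\Big\|^2
\le\sqrt2\,\eps_i .
\]
From these we define regularized \(A\)-side reflections \(Z_{A_i},X_{A_i}\), proportional to \(A_{i,0}\pm A_{i,1}\) after taking the sign of the operator. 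We then derive the approximate relations
\[
\|(Z_{A_i}-Z_{B_i})\ket\psi\|^2,\qquad
\|(X_{A_i}-X_{B_i}D_i^f)\ket\psi\|^2,\qquad
\|\{X_{A_i},Z_{A_i}\}\ket\psi\|^2,
\]
each bounded by a constant times \(\eps_i\). We carry the constants explicitly rather than by order of magnitude, because the target is the specific value \(\cZ=3(1+\sqrt2)/8\).

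\emph{Step 2: extraction and fidelity.} On each \(B_j\) the \(Z_{B_j}\) reflection is already physical, so we use the partial-swap isometry built from \(Z_{B_j}\) and the physical \(X_{B_j}\). Its controlled operations are arranged so that \(Z_{B_j}\) maps exactly to Pauli \(Z\) on the ancilla and to the identity on the junk. Consequently every \(D_i^f\), which is a function only of the projectors of the \(Z_{B_j}\), maps to the ideal Pauli-\(Z\) function, for every Boolean \(f\) simultaneously.

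Rather than bound errors of the isometry directly, we will prove an operator inequality. Let \(\Lambda\) be the product extraction channel and \(\Lambda^\dagger\) its dual, and define the ideal stabilizer projector
\[
P=\prod_i\frac{(\id+S_i^Z)}2\frac{(\id+S_i^X)}2=\proj{\Psi_f}.
\]
We use the standard linear-bound technique (as in Kaniewski-type operator inequalities): find constants with
\[
\Lambda^\dagger(P)\succeq \id-\cZ\big(2\sqrt2\,n\,\id-\mathcal B_f\big)
\]
valid for all local reflections. To reduce the \(2n\)-qubit inequality to single pairs, we use \(P\succeq \id-\sum_i\big(\id-\tfrac{(\id+S_i^Z)(\id+S_i^X)}{4}\big)\), a union bound valid because the stabilizers commute. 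The pulled-back stabilizers then factor as the \(A_i\)-side extraction combined with the exactly preserved \(Z\)-functions on the \(B\) side. The problem thus reduces to the single-block inequality
\[
\Lambda_i^\dagger\Big(\tfrac{(\id+S_i^Z)(\id+S_i^X)}4\Big)\succeq \id-\cZ\,(2\sqrt2-\mathcal C_i^f),
\]
in which \(D_i^f\) acts as a fixed commuting reflection. Multiplying \(X_{B_i}\) by that reflection gives another reflection, so the single block looks like ordinary CHSH with \(R_{i,1}\) in place of \(X_{B_i}\).

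\emph{Main obstacle.} The hardest step is proving this single-block inequality with exactly the constant \(\cZ\) while the \(B\)-side extraction is constrained to preserve \(Z_{B_i}\) exactly. The unconstrained Kaniewski extraction rotates both observables and would destroy the \(D_i^f\) structure. We would reduce via Jordan's lemma to qubit blocks parametrized by the angle between \(A_{i,0}\) and \(A_{i,1}\), and the angle between \(R_{i,0}\) and \(R_{i,1}\) on the \(B\) side. We would then choose the \(B\)-side extraction angle-dependently, but always with \(Z\) fixed and \(X\) regularized. Finally we would optimize a one-parameter trigonometric inequality in these angles. We expect the worst case to give the slope \(3(1+\sqrt2)/8\), which equals \((2\sqrt2+2)\cdot 3/16\). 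If the constrained extraction cannot reach this slope, the fallback is the weaker norm-based estimate from Step 1, which yields a linear bound with a worse constant.

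Summing the block inequalities and taking the expectation in \(\ket\psi\) gives \(F^2=\Tr(P\rho_{\rm ext})\ge1-\cZ\eps\).
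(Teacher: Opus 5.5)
\textbf{There is a genuine gap: the pairwise inequality that fixes the constant $\cZ$ is conjectured, not proved.} Your reduction is the same as the paper's. You use a $B$-side extraction whose dual fixes every physical $Z_{B_j}$, so every $D_i^f$ is pulled back unchanged. You use a CHSH-type extraction on $A_i$ that depends only on $A_{i,0},A_{i,1}$. You apply the union bound $\id-P_f\preceq\sum_i(\id-P_i^f)$ for the commuting pair projectors. This reduces everything to the single-block inequality $\Lambda^\dagger(P_i^f)\succeq\id-\cZ(2\sqrt2\,\id-\mathcal C_i^f)$. Your Step~1 plays no role on that route. That single-block inequality carries the whole quantitative content of the theorem. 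You only expect it (``we expect the worst case to give the slope \ldots''), and you leave the $B$-side channel unspecified. Your fallback gives a different constant, so it does not prove the statement. The boundary point $a=\pi/2$, $b=0$ does give $\cZ$: there the pulled-back projector is $\id/4$ and the CHSH operator is $2X\otimes Z$. But that point only shows no smaller slope can work. It says nothing about whether $\cZ$ suffices on the rest of the domain.

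Three things are missing.
\begin{itemize}
\item[(i)] \textbf{An explicit $B$-side channel.} Write the local pair $(Z_{B_i},X_{B_i})$ in Jordan form as $\widehat Z_B=Z$, $\widehat X_B=\cos(2b)Z+\sin(2b)X$. The paper then takes $\Gamma_b^\dagger(O)=\tfrac{1+g(b)}2O+\tfrac{1-g(b)}2ZOZ$, with Kaniewski's $g(t)=(1+\sqrt2)(\sin t+\cos t-1)$.
\item[(ii)] \textbf{Keying the channel to the local angle.} The channel on $B_i$ must depend on the local angle $b$, not on the angle between $R_{i,0}$ and $R_{i,1}$ as you propose. That angle's Jordan parameter is $b$ or $\pi/2-b$ depending on the eigenvalue of $D_i^f$. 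Since $D_i^f$ lives on the other $B_j$, a channel keyed to it is not local to $B_i$. The paper makes one channel serve both sectors using three facts: $-\widehat X_B(b)=Z\widehat X_B(\pi/2-b)Z$, $g(\pi/2-b)=g(b)$, and $Z$-covariance of $\Gamma_b$.
\item[(iii)] \textbf{A proof over the full two-angle domain.} With the channels fixed, what remains is not a one-parameter trigonometric optimization. It is a two-angle family of operator inequalities in $(a,b)$. Splitting by the involution $(Z\otimes Z)(X\otimes X)$ and by the two regimes of the $A$-side channel turns it into four determinant inequalities. The paper establishes these only by a rigorous computer-assisted certificate: exact Taylor bounds near the zero set, and nonnegative Bernstein coefficients over $\mathbb Q(\sqrt2)$ on a finite rectangle cover elsewhere.
\end{itemize}
Without this verification, or an analytic substitute, the constant $\cZ$ is not established.
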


The same coefficient \(\cZ\) is valid for every Boolean phase function
and for arbitrary local dimensions.  Its independence of \(n\) refers to
the slope with respect to the total Bell deficit, not to the behavior at
fixed noise as \(n\) increases.  For example, if each CHSH term has
expectation \(2\sqrt2\,v\), then
\[
\eps
=
2\sqrt2\,n(1-v).
\]
Maintaining a fixed global fidelity as \(n\) grows therefore requires
\[
1-v=O(1/n).
\]

\subsection{\(Z\)-preserving extraction channels}
\label{subsec:extraction}

We first state the properties of the extraction channels needed for the
robust bound.  Their blockwise construction using Jordan's lemma is given
in Appendix~\ref{app:robust}.  On every \(B_j\) block, the dual channel
satisfies
\begin{equation}
\Gamma_{B_j}^\dagger(Z_{B_j})
=
Z_{B_j}.
\label{eq:Zpreservationmain}
\end{equation}
It therefore also fixes the two spectral projectors
\((\id\pm Z_{B_j})/2\).  Since
\(D_i^f(\bm Z_{B_{\bar i}})\) depends only on the \(Z\) observables of the
other \(B\) systems, the product of their extraction channels obeys
\begin{equation}
\left(
\bigotimes_{j\ne i}\Gamma_{B_j}^\dagger
\right)
\!\left[
D_i^f(\bm Z_{B_{\bar i}})
\right]
=
D_i^f(\bm Z_{B_{\bar i}}).
\label{eq:Dpreservationmain}
\end{equation}

The extraction on \(A_i\) is the standard CHSH extraction determined only
by \(A_{i,0}\) and \(A_{i,1}\)~\cite{Kaniewski2016}.  Let
\(\Lambda^\dagger\) denote the product of all local dual extraction
channels.  For each \(i\), define the projector
\begin{equation}
P_i^f
=
\frac{(\id+S_i^Z)(\id+S_i^X)}4
\label{eq:pairprojector}
\end{equation}
and its pullback
\[
K_i^f
=
\Lambda^\dagger(P_i^f).
\]

Because Eq.~\eqref{eq:Dpreservationmain} leaves the derivative reflection
in \(S_i^X\) unchanged, its \(\pm1\) eigenspaces may be considered
separately.  In either eigenspace, the problem reduces to the same CHSH
extraction inequality.  This gives
\begin{equation}
K_i^f
\succeq
\id
-
\cZ
\left(
2\sqrt2\,\id-\mathcal C_i^f
\right).
\label{eq:pairtradeoffmain}
\end{equation}
The pairwise operator bound therefore depends only on the Bell deficit of
the \(i\)th CHSH term and not on the particular Boolean derivative appearing
in that term.  Appendix~\ref{app:robust} proves
Eq.~\eqref{eq:pairtradeoffmain} on the two-dimensional Jordan blocks and
extends the result to arbitrary local Hilbert-space dimensions.

\subsection{Robust fidelity bound}
\label{subsec:robustbound}

We now combine the pairwise inequalities into a fidelity bound for the
full Choi state.  Since the generalized Choi stabilizers commute, the
target projector factorizes as
\begin{equation}
P_f
:=
\proj{\Psi_f}
=
\prod_{i=1}^n P_i^f,
\label{eq:pairprojectors}
\end{equation}
where the \(P_i^f\) are defined in
Eq.~\eqref{eq:pairprojector}.  For commuting projectors,
\begin{equation}
\id-P_f
\preceq
\sum_{i=1}^n
(\id-P_i^f).
\label{eq:projectorunionmain}
\end{equation}

Set
\[
K_f
=
\Lambda^\dagger(P_f).
\]
Applying the positive unital map \(\Lambda^\dagger\) to
Eq.~\eqref{eq:projectorunionmain} and using
Eq.~\eqref{eq:pairtradeoffmain} gives
\begin{align}
\id-K_f
&\preceq
\sum_{i=1}^n(\id-K_i^f)
\nonumber\\
&\preceq
\cZ
\sum_{i=1}^n
\left(
2\sqrt2\,\id-\mathcal C_i^f
\right)
\nonumber\\
&=
\cZ
\left(
2\sqrt2\,n\,\id-\mathcal B_f
\right).
\label{eq:globaltradeoffmain}
\end{align}
Taking the expectation value in the physical state yields
\[
\Tr(\rho_{\rm ext}P_f)
=
\expect{K_f}
\ge
1-\cZ\eps.
\]
Since \(P_f=\proj{\Psi_f}\), this proves
Eq.~\eqref{eq:robustF}.

For \(\eps=0\), Theorem~\ref{thm:robustselftest} recovers the
state-extraction conclusion of Theorem~\ref{thm:bellexact}.  The particular
choice of extraction also provides the connection to gate certification.
The channel acting on each reference party \(A_i\) depends only on the two
physical observables \(A_{i,0}\) and \(A_{i,1}\).  Consequently, when the
same reference observables are used in the two test routes of
Sec.~\ref{sec:gate}, they define the same reference-side extraction.  This
shared reference map is the ingredient needed to combine the two state
tests into a statement about one effective channel.

\section{Device-independent gate certification}
\label{sec:gate}

Sections~\ref{sec:bell} and~\ref{sec:robust} certify the Choi state
associated with a Boolean-phase gate.  A gate statement requires one
additional ingredient: the input and output state tests must refer to the
same extracted reference system.  Following the channel-certification
framework of Ref.~\cite{Sekatski2018}, we impose this condition using an
independent-source network in which the choice between the input and
gate-output tests is hidden from the reference devices.  The same physical
reference observables are then used in both tests and, by Sec.~\ref{sec:robust},
define the same reference-side extraction.  Source independence plays a
separate role, allowing the resulting input interface to be chosen as a
product of local channels.

\subsection{Independent-source network}
\label{subsec:network}

For each \(i=1,\ldots,n\), an independent source \(S_i\) distributes an
arbitrary bipartite state on
\(\mathcal H_{A_i}\otimes\mathcal H_{Q_i}\).  Thus
\begin{equation}
\rho_{AQ}
=
\bigotimes_{i=1}^n
\rho_{A_iQ_i},
\label{eq:independence}
\end{equation}
with no restriction on the local dimensions.  Independent-source
structures of this type are standard in quantum-network nonlocality and
self-testing~\cite{Tavakoli2022,Renou2018,Bancal2018}.  The physical device
to be certified is a CPTP map
\begin{equation}
\mathcal E:
\mathsf L\!\left(
\bigotimes_{i=1}^n\mathcal H_{Q_i}
\right)
\longrightarrow
\mathsf L\!\left(
\bigotimes_{i=1}^n\mathcal H_{R_i}
\right).
\label{eq:physicalchannel}
\end{equation}
Any fixed internal ancilla of the physical process may be absorbed into
this CPTP description.  The independent-source model excludes
pre-existing correlations between such ancillary systems and the source
states in Eq.~\eqref{eq:independence}.

We use the usual Bell-model assumptions in each round.  The local
measurement inputs are chosen independently of the source variables, and
the spatially separated measurement stations do not communicate during the
round.  After the sources have emitted their systems, an independent route
choice \(e\in\{0,1\}\) is made.  For \(e=0\), the systems \(Q_i\) are sent
directly to the \(B_i\) stations.  For \(e=1\), they first pass through
\(\mathcal E\), and the output systems \(R_i\) are then sent to the same
stations.  The route choice is available to the routing device but is not
supplied to any reference party \(A_i\).  The resulting network is shown
schematically in Fig.~\ref{fig:network}.

\begin{figure}[t]
\centering
\includegraphics[width=\columnwidth]
{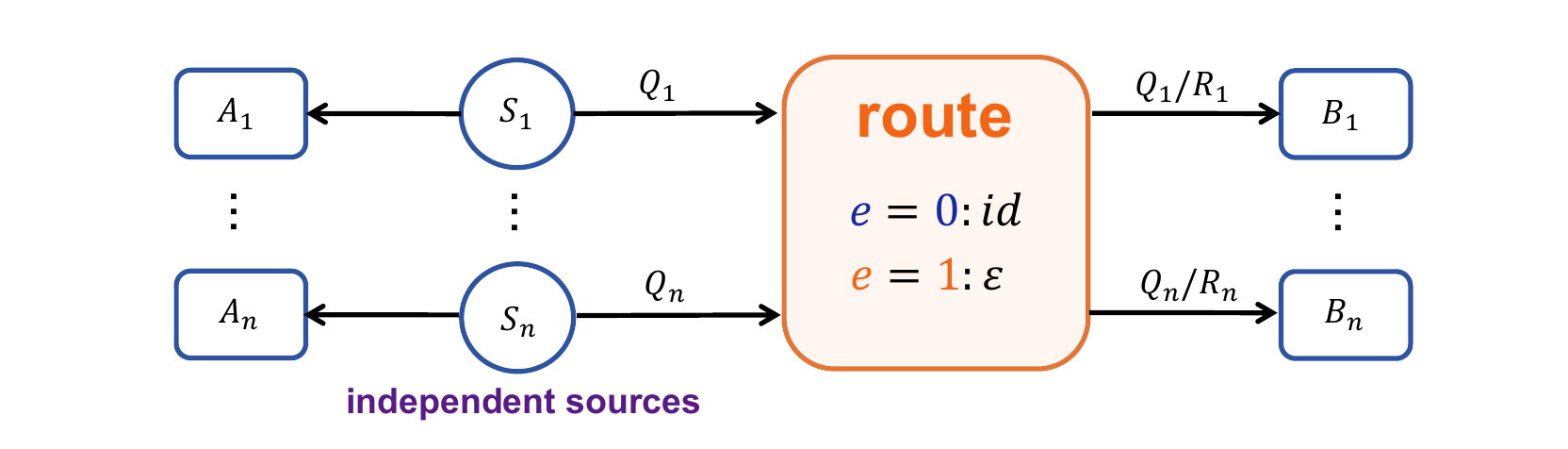}
\caption{
Independent-source network for gate certification.
Each source \(S_i\) distributes systems to the reference party \(A_i\)
and to the channel input \(Q_i\).
For \(e=0\), the \(Q_i\) systems are routed directly to the \(B_i\)
stations, whereas for \(e=1\) they pass through the physical channel
\(\mathcal E\), producing output systems \(R_i\).
The route choice \(e\) is not supplied to the reference parties.
}
\label{fig:network}
\end{figure}

Each \(A_i\) therefore receives only its Bell input and cannot condition
its measurements on \(e\).  The identity and gate-output tests consequently
use the same two physical observables \(A_{i,0}\) and \(A_{i,1}\).  Thus
their equality in the two tests is an operational restriction rather than
a choice of notation.

The two routes evaluate different Bell expressions.  The identity route
uses \(\mathcal B_0\), corresponding to the constant Boolean function
\(f=0\), while the gate-output route uses \(\mathcal B_f\).  Their Bell
deficits are
\begin{align}
\eps_{\rm in}
&=
2\sqrt2\,n
-
\langle\mathcal B_0\rangle_{e=0},
\label{eq:epsin}\\
\eps_{\rm out}
&=
2\sqrt2\,n
-
\langle\mathcal B_f\rangle_{e=1}.
\label{eq:epsout}
\end{align}

By Sec.~\ref{sec:robust}, the extraction on \(A_i\) is determined only by
the two physical observables \(A_{i,0}\) and \(A_{i,1}\).  Since these
observables are the same in the two routes, the corresponding state
self-tests use the shared reference-side extraction
\begin{equation}
\Lambda_A
=
\bigotimes_{i=1}^n
\Lambda_{A_i}.
\label{eq:referenceextraction}
\end{equation}
This is the condition needed to interpret the two state certificates as
statements about one effective channel.

For resource counting, a test configuration specifies both the route
\(e\) and one local measurement setting for every party.  The gate-output
route uses the \(2n+2\) global settings described in
Sec.~\ref{sec:bell}.  For the identity route, \(f=0\), so all Boolean
derivative signs are trivial.  It is sufficient to use the all-\(Z_B\)
and all-\(X_B\) settings, each paired with the two uniform \(A\)-side
settings.  The full gate-certification protocol therefore requires
\begin{equation}
2n+6
\label{eq:gateconfigurations}
\end{equation}
route-and-setting configurations.

\subsection{Effective channel and local interfaces}
\label{subsec:effectivechannel}

Device-independent certification does not identify the physical map
\(\mathcal E\) in an a priori calibrated basis.  Its action can instead be
characterized up to local input and output interfaces between the ideal
qubits and the physical systems.  Let
\[
\mathcal H_{\rm id}
=
(\mathbb C^2)^{\otimes n}
\]
denote the ideal \(n\)-qubit space.  We will show that there exist product
local interface channels
\begin{align}
\mathcal I
&=
\bigotimes_{i=1}^n\mathcal I_i,
&
\mathcal I_i:
\mathsf L(\mathbb C^2)
&\longrightarrow
\mathsf L(\mathcal H_{Q_i}),
\label{eq:injection}\\
\mathcal O
&=
\bigotimes_{i=1}^n\mathcal O_i,
&
\mathcal O_i:
\mathsf L(\mathcal H_{R_i})
&\longrightarrow
\mathsf L(\mathbb C^2),
\label{eq:extraction}
\end{align}
for which
\begin{equation}
\mathcal E_{\rm eff}
=
\mathcal O\circ\mathcal E\circ\mathcal I:
\mathsf L(\mathcal H_{\rm id})
\longrightarrow
\mathsf L(\mathcal H_{\rm id}).
\label{eq:Eeff}
\end{equation}
The existence of \(\mathcal I\) and \(\mathcal O\) is inferred from the
observed Bell correlations; they are not assumed to be calibrated
operations available in the experiment.  Accordingly, the certified
object is the effective channel \(\mathcal E_{\rm eff}\), rather than the
unrestricted action of \(\mathcal E\) outside the input subspace selected
by the certified interface.

Let
\[
\mathcal U_f(\cdot)
=
U_f(\cdot)U_f^\dagger
\]
denote the target unitary channel.  For fixed interfaces
\((\mathcal I,\mathcal O)\), define its Choi fidelity with the effective
channel as the root fidelity of their normalized Choi states,
\begin{equation}
F_{\rm Choi}(\mathcal E_{\rm eff},\mathcal U_f)
=
F\!\left[
(\id\otimes\mathcal E_{\rm eff})(\proj{\Phi_d}),
\proj{\Psi_f}
\right].
\label{eq:defchannelF}
\end{equation}
This is the channel fidelity used in the device-independent certification
framework of Ref.~\cite{Sekatski2018}, here with product local input and
output interfaces.

\subsection{Choi-fidelity bound}
\label{subsec:gatefidelity}

The robust self-test of Sec.~\ref{sec:robust} supplies extraction channels
for both test routes.  Their reference-side map is the shared
\(\Lambda_A\) of Eq.~\eqref{eq:referenceextraction}, while the channels
acting on the routed systems may differ.  Denote them by \(\Lambda_Q\) for
the identity route and by \(\Lambda_R\) for the gate-output route.  Define
\begin{align}
F_{\rm in}
&:=
F\!\left(
(\Lambda_A\otimes\Lambda_Q)(\rho_{AQ}),
\proj{\Phi_d}
\right),
\label{eq:Fin}\\
F_{\rm out}
&:=
F\!\left(
(\Lambda_A\otimes\Lambda_R)
[(\id_A\otimes\mathcal E)(\rho_{AQ})],
\proj{\Psi_f}
\right).
\label{eq:Fout}
\end{align}
Theorem~\ref{thm:robustselftest} gives
\begin{equation}
F_{\rm in}^2
\ge
1-\cZ\eps_{\rm in},
\qquad
F_{\rm out}^2
\ge
1-\cZ\eps_{\rm out}.
\label{eq:FinFout}
\end{equation}

\begin{corollary}[Choi-fidelity bound]
\label{cor:gatefidelity}
Under the Bell-model, timing, and source-independence assumptions above,
there exist product local channels \(\mathcal I\) and \(\mathcal O\) such
that
\begin{equation}
F_{\rm Choi}(\mathcal E_{\rm eff},\mathcal U_f)
\ge
\max\!\left\{
0,\,
\cos\!\left[
\arccos F_{\rm in}
+
\arccos F_{\rm out}
\right]
\right\}.
\label{eq:channelF}
\end{equation}
In particular, whenever
\(\cZ\eps_{\rm in}\le1\) and
\(\cZ\eps_{\rm out}\le1\), the observed Bell scores imply
\begin{align}
F_{\rm Choi}(\mathcal E_{\rm eff},\mathcal U_f)
\ge
\max\Bigl\{0,\,
\cos\bigl[
&\arccos\sqrt{1-\cZ\eps_{\rm in}}
\nonumber\\
&+
\arccos\sqrt{1-\cZ\eps_{\rm out}}
\bigr]
\Bigr\}.
\label{eq:channelDeficitBound}
\end{align}
\end{corollary}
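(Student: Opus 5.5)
The plan is to follow the Sekatski-type composition argument: build the input interface from the identity-route certificate, the output interface from the gate-route extraction, and then relate the effective Choi state to the two extracted states with the triangle inequality for the Bures angle $\arccos F$.

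\emph{Step 1: the input interface.} Because of Eq.~\eqref{eq:independence}, the identity-route extracted state factorizes as
\[
(\Lambda_A\otimes\Lambda_Q)(\rho_{AQ})=\bigotimes_i(\Lambda_{A_i}\otimes\Lambda_{Q_i})(\rho_{A_iQ_i}).
\]
This state has fidelity $F_{\rm in}$ with $\ket{\Phi_d}=\bigotimes_i\ket{\Phi_2}_{A_iQ_i}$. I will construct product channels $\mathcal I_i:\mathsf L(\mathbb C^2)\to\mathsf L(\mathcal H_{Q_i})$ by a teleportation-type reading of the reference state. Since $\ket{\Phi_2}$ is maximally entangled, any operator on $A_i$ can be steered by a map on the ideal qubit. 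Concretely, I would define $\mathcal I_i$ through its Choi operator, built from $\rho_{A_iQ_i}$ and the Kraus form of $\Lambda_{A_i}$ (a Petz-type or transpose-recovery construction), so that the following holds exactly:
\[
(\Lambda_A\otimes\id_Q)(\rho_{AQ})\approx(\id\otimes\mathcal I)(\proj{\Phi_d}).
\]
The cleaner option, which I would try first, is the Uhlmann route. Purify $\rho_{AQ}$ per source and use that $\Lambda_A$ acts only on $A$. Then pick the purification of $\proj{\Phi_d}$ attaining $F_{\rm in}$. The isometry on the $Q$ side, composed with a partial trace, defines $\mathcal I$. Product structure follows because the optimization splits over the independent sources.

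\emph{Step 2: transfer to the gate route.} Set $\mathcal O=\Lambda_R$, which is product local by Theorem~\ref{thm:robustselftest}. The reference map $\Lambda_A$ is shared (Eq.~\eqref{eq:referenceextraction}) and commutes with $\id_A\otimes\mathcal O\circ\mathcal E$. Applying the CPTP map $\id_A\otimes(\mathcal O\circ\mathcal E\circ\Lambda_Q^{-1}\ldots)$ is not possible directly. So I will compare instead the two states
\[
\sigma_1=(\Lambda_A\otimes\mathcal O\mathcal E)(\rho_{AQ}),\qquad \sigma_2=(\id\otimes\mathcal O\mathcal E\mathcal I)(\proj{\Phi_d}).
\]
Both arise from applying the same CPTP map $\id_A\otimes\mathcal O\mathcal E$ to, respectively, $(\Lambda_A\otimes\id)(\rho_{AQ})$ and $(\id\otimes\mathcal I)(\proj{\Phi_d})$. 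By Step 1 and monotonicity of fidelity, $F(\sigma_1,\sigma_2)\ge F_{\rm in}$. Also $F(\sigma_1,\proj{\Psi_f})=F_{\rm out}$ by Eq.~\eqref{eq:Fout}.

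\emph{Step 3: combine.} The Bures angle $A(\rho,\sigma)=\arccos F(\rho,\sigma)$ is a metric. Hence
\[
A(\sigma_2,\proj{\Psi_f})\le \arccos F_{\rm in}+\arccos F_{\rm out}.
\]
Taking the cosine, truncated at $0$ when the sum exceeds $\pi/2$, gives Eq.~\eqref{eq:channelF}. Monotonicity of $\arccos$ together with Eq.~\eqref{eq:FinFout} then gives Eq.~\eqref{eq:channelDeficitBound}.

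\emph{Main obstacle.} The hardest step is Step 1, making the comparison in Step 2 legitimate. We need $\mathcal I$ such that the state $(\Lambda_A\otimes\id_Q)(\rho_{AQ})$, on which the unextracted $Q$ is fed into $\mathcal E$, is within fidelity $F_{\rm in}$ of $(\id\otimes\mathcal I)(\proj{\Phi_d})$. The certificate only controls $(\Lambda_A\otimes\Lambda_Q)(\rho_{AQ})$.

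The approach I would try is as follows. Write $\Lambda_{A_i}$ via a Stinespring isometry, so that $(\Lambda_A\otimes\id)(\rho_{AQ})$ has a purification $\ket{\chi}$ on $A\,Q\,E$ with some environment $E$. The certified fidelity implies a purification of $\proj{\Phi_d}\otimes\tau$ close to $\ket{\chi}$. Uhlmann's theorem then gives the exact overlap $F_{\rm in}$ between $\ket{\chi}$ and $\ket{\Phi_d}_{AQ'}\otimes\ket{\eta}$ for a suitable extension. Because $\ket{\Phi_d}$ is maximally entangled, the residual $QE$-side part can be realized as $(\id\otimes\mathcal I)$ acting on the ideal half, with $\mathcal I$ obtained by tracing out $E$.

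Independence of the sources and the product form of $\Lambda_A$ let this be done source by source. This yields $\mathcal I=\bigotimes_i\mathcal I_i$. Multiplicativity of fidelity under tensor products preserves the bound $F_{\rm in}$ in the product.
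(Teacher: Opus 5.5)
Your architecture is the paper's: it is the composition argument of Ref.~\cite{Sekatski2018}, namely contractivity of the fidelity under $\id\otimes\mathcal O\circ\mathcal E$ followed by the Bures-angle triangle inequality. Your Steps~2 and~3 are correct. The paper, however, simply cites that reference for the existence of the input interface. You try to construct it, and the construction you sketch does not go through.

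Your first option, the ``exact'' Choi-operator construction, fails. The Choi state of a trace-preserving $\mathcal I$ must have $A'$-marginal $\id/d$, but $\omega_{A'Q}:=(\Lambda_A\otimes\id_Q)(\rho_{AQ})$ has marginal $\Lambda_A(\rho_A)$, which is generally not maximally mixed. Your Uhlmann sketch is ill-typed. The vector $\ket{\chi}$ lives on $A'QE$ with the physical $Q$, whereas $\ket{\Phi_d}\otimes\ket{\eta}$ contains the extracted register $Q'$. Moreover, $F_{\rm in}$ is the fidelity of $(\id\otimes\Lambda_Q)(\omega_{A'Q})$, not of $\omega_{A'Q}$. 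So Uhlmann cannot give you an overlap equal to $F_{\rm in}$ with $\ket{\chi}$. A Stinespring dilation of $\Lambda_A$ does nothing to connect $Q$ with $Q'$, and maximal entanglement of $\ket{\Phi_d}$ is not what produces a map into $Q$. You correctly name the obstacle (the certificate only controls $(\Lambda_A\otimes\Lambda_Q)(\rho_{AQ})$) but never resolve it.

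The missing step is short.
\begin{itemize}
\item Because $\Lambda_Q$ is trace preserving, the $A'$-marginal of $(\id\otimes\Lambda_Q)(\omega_{A'Q})$ is $\omega_{A'}$. Monotonicity under $\Tr_{Q'}$ then gives $F(\omega_{A'},\id/d)\ge F_{\rm in}$.
\item Take $\ket{\chi}=\bigotimes_i\ket{\chi_i}$ purifying $\omega_{A'Q}=\bigotimes_i\omega_{A_i'Q_i}$. It also purifies $\omega_{A'}$, with purifying system $QE$, while $\ket{\Phi_d}$ purifies $\id/d$ with purifying system $Q'$.
\item Uhlmann then gives an isometry $W:Q'\to QE$ (enlarging $E$ if needed) with $|\bra{\chi}(\id\otimes W)\ket{\Phi_d}|=F(\omega_{A'},\id/d)$.
\item Define $\mathcal I(\tau):=\Tr_E[W\tau W^\dagger]$. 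Tracing out $E$ gives $F\bigl(\omega_{A'Q},(\id\otimes\mathcal I)(\proj{\Phi_d})\bigr)\ge F_{\rm in}$, which is exactly what your Step~2 needs.
\end{itemize}
Equivalently, you can dilate $\Lambda_Q$ by $V:Q\to Q'K$ and apply Uhlmann on $A'Q'KE$. Then pull back with the CPTP left inverse $\mathcal R(Y)=V^\dagger YV+\Tr[(\id-VV^\dagger)Y]\sigma_0$, that is, $\mathcal I(\tau)=\mathcal R(\tau\otimes\eta_K)$.

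By Eq.~\eqref{eq:independence}, $\omega_{A'}=\bigotimes_i\Lambda_{A_i}(\rho_{A_i})$, so the Uhlmann isometry can be taken as $W=\bigotimes_iW_i$. Multiplicativity of the fidelity then gives $\mathcal I=\bigotimes_i\mathcal I_i$ with the same bound, as you claim. With this repair, your per-source argument actually derives the product form of the input interface. The paper's appendix, by contrast, only asserts that the composition interface consumes $\rho_{Q_i}$ as a local ancilla.
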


\begin{proof}
The channel-composition theorem of Ref.~\cite{Sekatski2018} combines two
state certificates when they use the same extracted reference system.
Equation~\eqref{eq:referenceextraction} provides precisely this common
reference map in the present construction.  The route-dependent channels
\(\Lambda_Q\) and \(\Lambda_R\) provide the remaining state-extraction
maps.

Writing the Bures angle as
\[
A(\rho,\sigma)
=
\arccos F(\rho,\sigma),
\]
the composition theorem gives
\begin{equation}
A_{\rm Choi}
\le
\arccos F_{\rm in}
+
\arccos F_{\rm out},
\label{eq:burescomposition}
\end{equation}
where
\[
A_{\rm Choi}
=
\arccos
F_{\rm Choi}(\mathcal E_{\rm eff},\mathcal U_f).
\]
The source-independence condition in
Eq.~\eqref{eq:independence} is not needed for this Bures-angle composition
itself.  Its role is to allow the fixed source marginals to be absorbed
port by port into the input interfaces, yielding the product channel
\(\mathcal I=\bigotimes_i\mathcal I_i\) in
Eq.~\eqref{eq:injection}.  Appendix~\ref{app:gate} gives this reduction
explicitly.

If
\[
\arccos F_{\rm in}
+
\arccos F_{\rm out}
\le
\frac{\pi}{2},
\]
taking the cosine of Eq.~\eqref{eq:burescomposition} gives the
nontrivial part of Eq.~\eqref{eq:channelF}.  If the angle sum exceeds
\(\pi/2\), the cosine is nonpositive and the trivial bound
\(F_{\rm Choi}\ge0\) is stronger.  This proves
Eq.~\eqref{eq:channelF}.  Substituting
Eq.~\eqref{eq:FinFout} then gives
Eq.~\eqref{eq:channelDeficitBound}.
\end{proof}

The shared reference extraction and source independence therefore play
different roles.  The former ensures that \(F_{\rm in}\) and
\(F_{\rm out}\) refer to the same ideal reference qubits, while the latter
allows the input interface to have the product form
\(\mathcal I=\bigotimes_i\mathcal I_i\).  Without source independence,
the same state statistics can in general certify only a source-assisted
implementation whose input interface uses an unknown ancillary state
entangled across different input ports~\cite{Sekatski2018}.

At maximal Bell values,
\[
F_{\rm in}
=
F_{\rm out}
=
1.
\]
Both state certificates are then exact, and
Eq.~\eqref{eq:channelF} gives
\[
F_{\rm Choi}(\mathcal E_{\rm eff},\mathcal U_f)
=
1.
\]
The normalized Choi states therefore coincide, which implies
\[
\mathcal E_{\rm eff}
=
\mathcal U_f.
\]
The next section specializes the construction to CCZ and shows that its
additional structure gives a stronger robustness bound.

\section{The CCZ gate}
\label{sec:ccz}

CCZ is the canonical cubic Boolean-phase gate and a basic multiqubit
non-Clifford operation.  It provides a concrete instance of the general
gate-certification construction and, more importantly, a case in which the
relations among the Boolean derivatives can be used to improve the
robustness bound.  The Bell experiment itself remains unchanged.

\subsection{Bell inequality}
\label{subsec:cczbell}

For \(n=3\) and
\[
f(a,b,c)=abc,
\]
the Boolean-phase gate \(U_f\) is CCZ.  For each
\(i\in\{1,2,3\}\), let \(j\) and \(k\) denote the other two indices.
Eq.~\eqref{eq:Bell} then becomes
\begin{align}
\mathcal B_{\rm CCZ}
=
\sum_{i=1}^3
\bigl[
&A_{i,0}
\bigl(
Z_{B_i}
+
X_{B_i}\CZ_{B_jB_k}
\bigr)
\nonumber\\
&+
A_{i,1}
\bigl(
Z_{B_i}
-
X_{B_i}\CZ_{B_jB_k}
\bigr)
\bigr],
\label{eq:BCCZ}
\end{align}
where
\begin{equation}
\CZ_{B_jB_k}
=
\id
-
2P_{B_j}^1P_{B_k}^1,
\qquad
P_{B_j}^1
=
\frac{\id-Z_{B_j}}2.
\label{eq:CZscore}
\end{equation}
As in the general construction, the controlled-\(Z\) factor in
Eq.~\eqref{eq:BCCZ} does not represent an entangling measurement.  Its
eigenvalue is computed classically from the separated \(Z\)-measurement
outcomes of \(B_j\) and \(B_k\), and is \(-1\) only when both corresponding
bits are \(1\).

Theorem~\ref{thm:bellexact} gives
\begin{equation}
\beta_{\rm L}(\mathcal B_{\rm CCZ})
=
6,
\qquad
\beta_{\rm Q}(\mathcal B_{\rm CCZ})
=
6\sqrt2 .
\label{eq:CCZbounds}
\end{equation}
At the quantum maximum, the Bell correlations self-test the six-qubit
Choi state
\begin{equation}
\ket{\Psi_{\rm CCZ}}
=
\frac1{\sqrt8}
\sum_{a,b,c\in\{0,1\}}
(-1)^{abc}
\ket{abc}_A\ket{abc}_B
\label{eq:JCCZ}
\end{equation}
together with the tested observables.  Consequently, maximal violations
of \(\mathcal B_0\) and \(\mathcal B_{\rm CCZ}\) in the network of
Sec.~\ref{sec:gate} certify the CCZ gate exactly.

\subsection{Improved robustness bound}
\label{subsec:cczrobust}

The uniform robustness proof of Sec.~\ref{sec:robust} is designed to apply
to an arbitrary Boolean function.  It treats the derivative constraints
separately and uses a \(B\)-side extraction that preserves every physical
\(Z\) observable, and hence every
\(D_i^f(\bm Z_{B_{\bar i}})\), exactly.  This gives a coefficient
independent of \(f\), but does not use possible relations among the
derivatives of a particular gate.

For CCZ, these derivatives have the specific form
\[
D_1=\CZ_{B_2B_3},
\qquad
D_2=\CZ_{B_1B_3},
\qquad
D_3=\CZ_{B_1B_2}.
\]
They are therefore not three unrelated derivative reflections.  Instead of
bounding their contributions separately, we retain them jointly in the
full six-qubit extracted projector.  This permits the standard CHSH product
extraction~\cite{Kaniewski2016}, while the remaining nonfactorizing
six-party contribution is controlled by a single operator inequality.
Only the robustness analysis changes; the Bell expression and all
measurement settings remain the same.

Define the CCZ Bell deficit by
\begin{equation}
\eps_{\rm CCZ}
=
6\sqrt2
-
\expect{\mathcal B_{\rm CCZ}}.
\label{eq:cczdeficit}
\end{equation}

\begin{theorem}[Robust self-testing of the CCZ Choi state]
\label{thm:cczrobust}
For every quantum realization of \(\mathcal B_{\rm CCZ}\), there exist
product local extraction channels such that
\begin{equation}
\begin{aligned}
F^2(\rho_{\rm ext},\proj{\Psi_{\rm CCZ}})
&\ge
1-\cCCZ\eps_{\rm CCZ},
\\
\cCCZ
&=
\frac{4+5\sqrt2}{16}
\simeq
0.691942 .
\end{aligned}
\label{eq:cczrobust}
\end{equation}
\end{theorem}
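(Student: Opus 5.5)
We will prove the bound by an operator inequality of the form
\[
\Lambda^\dagger(P_{\rm CCZ})\succeq \id-\cCCZ\left(6\sqrt2\,\id-\mathcal B_{\rm CCZ}\right),
\]
where \(\Lambda^\dagger\) is the dual of a product of local extraction channels. Taking the expectation in the physical state then gives Eq.~\eqref{eq:cczrobust}, exactly as in Sec.~\ref{subsec:robustbound}.

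\textbf{Choice of extraction.} On every party we use the standard Kaniewski CHSH extraction. Jordan's lemma decomposes each pair of reflections into qubit blocks with angle \(\theta\). On each block, the dual channel maps the ideal Paulis to
\[
Z\mapsto Z,\qquad X\mapsto g(\theta)X+\ldots,
\]
with the usual dephasing interpolation. On the \(B\) side we again choose the block map so that each physical \(Z_{B_j}\) is preserved exactly. Consequently every \(\CZ_{B_jB_k}\) in Eq.~\eqref{eq:BCCZ} is preserved, and we may work simultaneously on all six-party tensor products of Jordan blocks. Positivity of the target inequality is block-diagonal, so it suffices to prove it on a single \(2^6\)-dimensional block. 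That block is parametrized by three \(A\)-side angles \(a_i\) and three \(B\)-side angles \(b_i\).

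\textbf{Block inequality.} Expand \(P_{\rm CCZ}=\frac1{64}\sum_{s}s\) over the stabilizer group generated by \(S_i^Z\) and \(S_i^X\), and pull each element back through \(\Lambda^\dagger\). The \(Z\)-type factors are preserved exactly. Each \(X_{A_i}X_{B_i}\) factor acquires the product of local attenuation coefficients plus residual terms. The \(\CZ\) factors that occur when several \(S^X\) generators multiply are again \(Z\)-diagonal and therefore untouched. Thus \(\Lambda^\dagger(P_{\rm CCZ})\) splits into two parts:
\begin{itemize}
\item a product over \(i\) of single-pair Kaniewski-type operators, which is the part bounded in Eq.~\eqref{eq:pairtradeoffmain};
\item a correction collecting the cross terms in which two or three \(X\)-type generators act jointly.
\end{itemize}
Unlike Sec.~\ref{subsec:robustbound}, we will not use the union bound. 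Instead we bound the full expression
\[
\Lambda^\dagger(P_{\rm CCZ})-\id+\cCCZ\left(6\sqrt2\,\id-\textstyle\sum_i\mathcal C_i\right)\succeq0
\]
directly. Conditioning on the computational \(Z_B\) sector \(u\in\bits^3\) fixes all \(\CZ\) signs and reduces the operator to a sign-twisted product of three CHSH blocks. The relevant structure is that \(\prod_i D_i\) and the pairwise products are constrained. For CCZ, the total sign pattern differs from the all-plus pattern on few sectors, because of the cubic form. We will write the sector operators explicitly in the variables \(c_i=\cos a_i\) and \(\cos b_i\).

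\textbf{Closing the argument and main obstacle.} The remaining step is a scalar minimization. We need the smallest eigenvalue of the block operator to be nonnegative uniformly over all angles, with the constant \(\cCCZ=(4+5\sqrt2)/16\) tight. We expect this to be the main obstacle: it is a six-angle semidefinite condition, and the cross terms do not factorize.

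Our first attempt will use symmetry and convexity. The CHSH deficit per block is linear in \(\cos\), and the extraction coefficients are the standard piecewise functions. We will therefore argue that the worst case occurs when at most one pair is noisy, or when all three pairs are equally noisy. This reduces the problem to one- or two-parameter families, which we check analytically near \(\theta=\pi/4\), where the slope determines \(\cCCZ\), and then globally.

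If the reduction fails, the fallback is an explicit sum-of-squares certificate. We would write the block operator as the stated correction plus positive products of \((\id\pm Z)/2\) projectors times pairwise Kaniewski SOS terms, and verify numerically that the constant \(\frac{4+5\sqrt2}{16}\) is achieved.
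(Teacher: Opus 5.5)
Your overall strategy matches the paper's: an operator inequality $T_{\rm CCZ}\succeq0$ on each six-party Jordan block, then a direct sum or integral. The gap is that the step carrying all the content, positivity of the $64\times64$ block over the six angles, is not proved, and the reductions you propose toward it do not hold.

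\textbf{The reduction steps.}
\begin{itemize}
\item Conditioning on $u\in\{0,1\}^3$ block-diagonalizes nothing. $\mathcal C_i$ contains $\widehat X_{B_i}$ and the pulled-back projector contains $X_{B_i}$, and neither commutes with $Z_{B_i}$. You may fix $u_j,u_k$ only inside the $i$th term, and the three conditionings are mutually incompatible. That incompatibility \emph{is} the nonfactorizing contribution.
\item The claim that the worst case is one noisy pair, or three equally noisy pairs, is only asserted. No convexity in the Jordan angles is available, and the extraction changes form at $\theta=\pi/4$.
\item You look for the constant in the wrong regime. The affine slope is pinned at boundary configurations where one pair's CHSH value drops to $2$, not near $\theta=\pi/4$. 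In the paper this is extracted weight $k_+=(2+\sqrt2)/8$ at deficit $2\sqrt2-2$.
\item A numerical check is not a proof. The paper needs a rigorous certificate: 64 angular regions, tangent-half-angle matrix polynomials over $\mathbb Q(\sqrt2)$, Taylor--Schur bounds near equality points, and Bernstein coefficient matrices with validated eigenvalue bounds.
\end{itemize}

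\textbf{The choice of extraction.} You keep the extraction of Appendix~\ref{app:robust} (Kaniewski on $A$, $Z$-preserving on $B$), and this removes the structure the paper exploits. Note that $\cCCZ=(4+5\sqrt2)/16$ is exactly Kaniewski's CHSH slope.
\begin{itemize}
\item The paper applies Kaniewski's extraction to all six parties. On the $B$ side the dephasing axis switches at $\pi/4$, so the physical $Z_{B_j}$ and the $\CZ$ factors are deliberately \emph{not} preserved.
\item It then uses the fixed frame $W_B=ZR_y(-\pi/4)$ and the covariance $\theta\mapsto\pi/2-\theta$. With these, each term satisfies Kaniewski's inequality $T_i^{\rm eff}\succeq0$ in both $D_i$ sectors, and only the residual $C_{1/2}=T_{\rm CCZ}-\frac12\sum_iT_i^{\rm eff}$ needs a certificate.
\item With your $B$-side map, the pairwise inequality at slope $\cCCZ$ is false. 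The endpoint $a=\pi/2$, $b=0$ has weight $1/4$ at CHSH value $2$ and forces $\cZ$. So Eq.~\eqref{eq:pairtradeoffmain} can never supply better than $\cZ$, and the entire improvement would have to come from inter-pair coupling.
\item You also put that coupling in the wrong place. Your dual maps act diagonally on the stabilizer group with factorized coefficients, so $\Lambda^\dagger(P_{\rm CCZ})=\prod_iK_i^f$ exactly, with no correction term. All the coupling sits in the noncommutation of $\mathcal C_i$ with $K_j$, through the dependence of $D_j$ on $Z_{B_i}$.
\end{itemize}

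Nothing in your plan shows that $\cCCZ$ is even valid for this extraction. Its boundary behaviour also differs from the paper's. For instance, at $a_1=0$ one has $\mathcal C_1=2Z_{A_1}Z_{B_1}$, independent of $D_1$, with extracted weight $1/2$ rather than $k_+$. So there is no reason to expect $(4+5\sqrt2)/16$ to be the constant your analysis produces, and your expectation that it comes out tight is unfounded.
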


For comparison, the uniform Boolean-phase coefficient of
Theorem~\ref{thm:robustselftest} is
\[
\cZ
=
\frac{3(1+\sqrt2)}8
\simeq
0.905330 .
\]
Thus
\[
\cCCZ<\cZ,
\]
and Eq.~\eqref{eq:cczrobust} gives a strictly stronger affine bound for the
same total Bell deficit.  The \(A_i\)-side extraction continues to depend
only on the measured observables \(A_{i,0}\) and \(A_{i,1}\).  The improved
bound is therefore compatible with the shared reference extraction required
in Sec.~\ref{sec:gate}.

The proof cannot be reduced to three independent pairwise inequalities.
After Jordan reduction, the full extracted CCZ projector leads to a
six-variable state-operator inequality.  Appendix~\ref{app:ccz} gives a
rigorous computer-assisted proof over the full Jordan domain, using exact
local estimates near the equality points and Bernstein-polynomial
certificates on the remaining region.  The coefficient \(\cCCZ\) is tight
for the particular product extraction and affine operator inequality used
there; no global optimality over all extraction channels or more general
robustness bounds is claimed.

For the identity route, the target factorizes into three Bell pairs.  The
same \(A\)-side extraction gives the corresponding bound with coefficient
\(\cCCZ\), as shown in Appendix~\ref{app:ccz}.  The two test routes
therefore satisfy
\begin{equation}
F_{\rm in}^2
\ge
1-\cCCZ\eps_{\rm in},
\qquad
F_{\rm out}^2
\ge
1-\cCCZ\eps_{\rm out}.
\label{eq:cczFinFout}
\end{equation}

\subsection{Fidelity bounds and visibility}
\label{subsec:cczvisibility}

Applying Corollary~\ref{cor:gatefidelity} to
Eq.~\eqref{eq:cczFinFout} gives
\begin{equation}
\begin{aligned}
&F_{\rm Choi}(\mathcal E_{\rm eff},\mathcal U_{\rm CCZ})
\\
&\quad\ge
\max\Bigl\{0,\,
\cos\bigl[
\arccos\sqrt{1-\cCCZ\eps_{\rm in}}
\\
&\hspace{8.2em}+
\arccos\sqrt{1-\cCCZ\eps_{\rm out}}
\bigr]
\Bigr\},
\end{aligned}
\label{eq:cczChannelDeficitBound}
\end{equation}
whenever
\(\cCCZ\eps_{\rm in}\le1\) and
\(\cCCZ\eps_{\rm out}\le1\).

For CCZ, the gate-output route uses eight of the test configurations
described in Sec.~\ref{sec:gate}: the four \(B\)-side setting patterns
consisting of the all-\(Z_B\) setting and the three single-\(X_B\)
settings, each paired with the two uniform \(A\)-side settings.  The
identity route requires four configurations.  Complete CCZ gate
certification therefore uses twelve route-and-setting configurations.

As a simple state-level noise model, consider the globally depolarized CCZ
Choi state
\begin{equation}
\rho_{v_{\rm dep}}
=
v_{\rm dep}\proj{\Psi_{\rm CCZ}}
+
(1-v_{\rm dep})\frac{\id}{64}.
\label{eq:white}
\end{equation}
With the ideal measurements,
\begin{equation}
\expect{\mathcal B_{\rm CCZ}}
=
6\sqrt2\,v_{\rm dep},
\label{eq:cczwhitebell}
\end{equation}
so the Bell inequality is violated for
\[
v_{\rm dep}>\frac1{\sqrt2}.
\]

For gate certification, we use instead a symmetric score-visibility model
in which the two observed Bell scores are reduced from their ideal values
by the same factor \(v\),
\[
\langle\mathcal B_0\rangle_{e=0}
=
\langle\mathcal B_{\rm CCZ}\rangle_{e=1}
=
6\sqrt2\,v.
\]
Here \(v\) parametrizes the observed Bell scores and does not assume the
depolarizing state model of Eq.~\eqref{eq:white}.

Figure~\ref{fig:cczrobustness} compares the fidelity guarantees obtained
from the uniform Boolean-phase bound and from the CCZ-specific joint
analysis.  For the uniform coefficient \(c_Z\), the fixed-extraction
state-fidelity bound becomes nontrivial for
\[
v>v_{\rm state}^{\rm uni}=0.869825,
\]
while the effective-channel Choi-fidelity bound becomes nontrivial for
\[
v>v_{\rm gate}^{\rm uni}=0.934913.
\]
Using the CCZ-specific coefficient \(c_{\rm CCZ}\) lowers these thresholds
to
\[
v>v_{\rm state}^{\rm CCZ}=0.829681
\]
and
\[
v>v_{\rm gate}^{\rm CCZ}=0.914840,
\]
respectively.  The higher channel thresholds arise because gate
certification combines two imperfect state certificates through the
Bures-angle bound.  Thus the joint CCZ analysis improves both the
state-level and gate-level guarantees without changing the Bell expression
or measurement settings.  Appendix~\ref{app:ccz} gives the corresponding
analytic expressions.

\begin{figure}[H]
\centering
\includegraphics[width=\columnwidth]{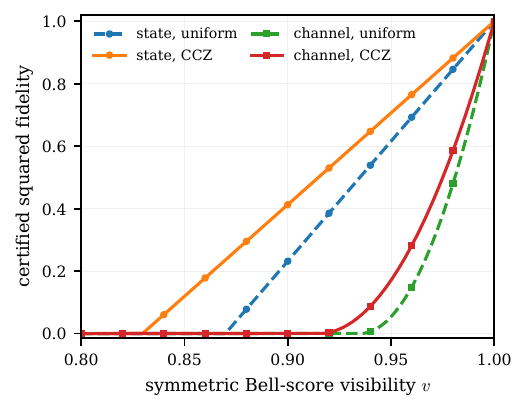}
\caption{
Comparison of the uniform Boolean-phase and CCZ-specific fidelity bounds
in the symmetric two-route score model.  The upper curves show the squared
Choi-state fidelity obtained from a single state test, while the lower
curves show the squared Choi fidelity of the effective channel obtained by
composing the input and output certificates.  Dashed curves use the uniform
coefficient \(c_Z\), and solid curves use the improved CCZ coefficient
\(c_{\rm CCZ}\).  The CCZ analysis lowers the nontrivial state-fidelity
threshold from \(0.869825\) to \(0.829681\) and the channel-fidelity
threshold from \(0.934913\) to \(0.914840\), without changing the Bell
expression or measurement settings.
}
\label{fig:cczrobustness}
\end{figure}

The CCZ case therefore illustrates two distinct features of the
construction.  The measurement architecture is inherited unchanged from
the general Boolean-phase test, while the additional relations among the
three CCZ derivatives allow a stronger conversion from Bell score to state
and channel fidelity.  The improvement requires neither additional
measurement settings nor entangling measurements.  Thus gate-specific
algebraic relations can strengthen the robustness guarantee without
changing the underlying Bell experiment.

\section{Conclusion and outlook}
\label{sec:discussion}

We have shown that Boolean-phase gates admit a simple and robust form of
device-independent certification.  The key observation is that Boolean
differentiation converts the gate-dependent phase structure into signs
that can be evaluated from local measurement outcomes.  This makes it
possible to use the same local binary measurement scheme throughout the
family, while changing only the classical scoring rule.  The resulting
Bell tests self-test the corresponding Choi states, provide explicit
robust fidelity guarantees, and can be combined in an independent-source
network to certify an effective quantum gate.  The CCZ case further shows
that additional algebraic relations among the derivatives can improve the
robustness without modifying the underlying Bell experiment.

The main structural feature of the construction is therefore a separation
between the quantum and classical parts of the certification task.  The
local measurements remain fixed, whereas the target operation enters
through classical processing of the observed outcomes.  This is especially
natural for Boolean-phase gates, which form a broad but highly structured
class of multiqubit phase operations containing CCZ and higher
controlled-\(Z\) gates.  The same viewpoint may be useful more generally
whenever the algebra of a quantum operation exposes relations that can be
tested through simple local observables rather than through more complex
measurement settings.

Two directions appear particularly natural.  First, Boolean-phase gates
use only phases \(0\) and \(\pi\), for which the relevant phase differences
reduce to binary signs.  Extending the construction to more general
diagonal gates would require a way to access nonbinary phase information
while retaining comparably simple local measurements.  Second, finite-data
bounds and more operational channel-distance guarantees would clarify how
the present robustness results translate into experimentally relevant
certification thresholds.  More broadly, the results suggest that known
algebraic structure in a quantum operation can be used to shift part of the
certification complexity from quantum measurement design to classical
processing of observed correlations.  Determining how far this principle
extends beyond Boolean-phase gates remains an open question.

\section*{Acknowledgments}
The authors thank Ivan \v{S}upi\'{c} for valuable discussions and insightful comments. The authors acknowledge support from the National Natural Science Foundation of China (Grants No. 62671305 and No. 62201252) and the Fundamental Research Funds for the Central Universities (Grants No. NS2025030 and No. NL2026011).

\section*{Data Availability}
The code and data are available in 
Ref.~\cite{BooleanPhaseCode2026}.

\section*{AUTHOR CONTRIBUTIONS}

Y.H. conceived and led the project, developed the main theoretical framework, and wrote the original manuscript. X.B. contributed to the theoretical analysis and derivations. A.G. contributed to the conceptual development, interpretation of the device-independent certification framework, and manuscript revision. J.Y. contributed to the conceptual discussions, supervision, and manuscript revision. All authors discussed the results and contributed to the final manuscript.

\appendix

\section{Exact self-testing proof}
\label{app:exact}

This appendix proves Theorem~\ref{thm:bellexact}.  We first derive the
relations implied by a maximally violated CHSH block, then reconstruct the
Boolean phase across the computational-basis sectors, and finally give the
local isometry extracting the target state and observables.  When analyzing
one CHSH block, we write \(A_0,A_1,R_0,R_1\) and suppress the party index.

\subsection{Relations from maximal CHSH violation}

At maximal violation, both squares in the sum-of-squares identity
Eq.~\eqref{eq:CHSHSOS} annihilate \(\ket{\psi}\).  Define
\begin{equation}
\widetilde Z_A
=
\frac{A_0+A_1}{\sqrt2},
\qquad
\widetilde X_A
=
\frac{A_0-A_1}{\sqrt2}.
\label{eq:rawaxes}
\end{equation}
The SOS relations give
\begin{equation}
\widetilde Z_A\ket{\psi}
=
R_0\ket{\psi},
\qquad
\widetilde X_A\ket{\psi}
=
R_1\ket{\psi}.
\label{eq:rawrel}
\end{equation}
Since \(A_0\) and \(A_1\) are reflections,
\begin{equation}
\{\widetilde Z_A,\widetilde X_A\}
=
0,
\qquad
\widetilde Z_A^2+\widetilde X_A^2
=
2\id.
\label{eq:rawaxisidentities}
\end{equation}
The standard CHSH regularization
~\cite{SupicBowles2020,Kaniewski2016} therefore gives anticommuting
reflections \(Z_A,X_A\) with the same action on the state:
\begin{equation}
Z_A\ket{\psi}
=
R_0\ket{\psi},
\qquad
X_A\ket{\psi}
=
R_1\ket{\psi},
\qquad
\{Z_A,X_A\}
=
0.
\label{eq:exactreg}
\end{equation}

Because the \(A\)-side operators commute with \(R_0\) and \(R_1\),
Eq.~\eqref{eq:exactreg} implies
\begin{equation}
\{R_0,R_1\}\ket{\psi}
=
0.
\label{eq:Ranticomm}
\end{equation}
For the block associated with \(A_iB_i\),
\[
R_0
=
Z_{B_i},
\qquad
R_1
=
X_{B_i}D_i^f(\bm Z_{B_{\bar i}}).
\]
Since \(D_i^f\) acts only on the other \(B\) parties, it commutes with
\(Z_{B_i}\) and \(X_{B_i}\), and hence
\[
\{R_0,R_1\}
=
\{Z_{B_i},X_{B_i}\}
D_i^f(\bm Z_{B_{\bar i}}).
\]
Using \((D_i^f)^2=\id\) in Eq.~\eqref{eq:Ranticomm} gives
\begin{equation}
\{Z_{B_i},X_{B_i}\}\ket{\psi}
=
0.
\label{eq:Banticomm}
\end{equation}
Together with Eq.~\eqref{eq:exactreg}, these are the state-dependent
relations
Eqs.~\eqref{eq:exactrelZ}--\eqref{eq:exactac} used in the main text.

\subsection{Computational-basis sectors and Boolean phases}

For each pair \(A_iB_i\), define
\begin{equation}
\begin{aligned}
P_{A_i}^{a}
&=
\frac{\id+(-1)^a Z_{A_i}}2,
\\
P_{B_i}^{b}
&=
\frac{\id+(-1)^b Z_{B_i}}2,
\\
a,b&\in\{0,1\}.
\end{aligned}
\label{eq:projectors}
\end{equation}
The relation
\(Z_{A_i}\ket{\psi}=Z_{B_i}\ket{\psi}\) implies
\(Z_{A_i}Z_{B_i}\ket{\psi}=\ket{\psi}\), and therefore
\begin{equation}
P_{A_i}^{a}P_{B_i}^{b}\ket{\psi}
=
0,
\qquad
a\ne b.
\label{eq:mismatch}
\end{equation}
Thus only sectors with matching computational labels contribute.

For \(x=(x_1,\ldots,x_n)\in\bits^n\), define
\begin{equation}
\ket{\psi_x}
=
\left(
\prod_{i=1}^n
P_{A_i}^{x_i}P_{B_i}^{x_i}
\right)
\ket{\psi},
\label{eq:branch}
\end{equation}
and denote the projector in parentheses by \(P_x\).  Also set
\begin{equation}
Y_i
:=
X_{A_i}X_{B_i}.
\label{eq:Yi}
\end{equation}
The \(Y_i\) commute for different \(i\).

We now make explicit how the relations of Sec.~\ref{subsec:exact} connect
neighboring computational-basis sectors.  From
Eq.~\eqref{eq:exactrelX}, multiplying by \(X_{B_i}\) gives
\[
Y_i\ket{\psi}
=
D_i^f(\bm Z_{B_{\bar i}})\ket{\psi}.
\]
Moreover, the anticommutation relations
Eqs.~\eqref{eq:exactac} and~\eqref{eq:Banticomm} imply, on the state, that
\(X_{A_i}\) and \(X_{B_i}\) exchange the two eigenspaces of
\(Z_{A_i}\) and \(Z_{B_i}\), respectively.  Since all projectors belonging
to other parties commute with \(Y_i\), we obtain
\[
Y_iP_x\ket{\psi}
=
P_{x\oplus e_i}Y_i\ket{\psi}.
\]
Combining the last two relations gives
\[
\begin{aligned}
Y_i\ket{\psi_x}
&=
P_{x\oplus e_i}
D_i^f(\bm Z_{B_{\bar i}})
\ket{\psi}
\\
&=
(-1)^{\partial_i f(x_{\bar i})}
P_{x\oplus e_i}\ket{\psi},
\end{aligned}
\]
where the second equality uses the fact that
\(D_i^f\) depends only on the bits other than \(x_i\).  Hence
\begin{equation}
Y_i\ket{\psi_x}
=
(-1)^{\partial_i f(x_{\bar i})}
\ket{\psi_{x\oplus e_i}}.
\label{eq:edgeflip}
\end{equation}

Since \(Y_i\) is unitary, Eq.~\eqref{eq:edgeflip} shows that adjacent
sectors have equal norm.  The matched sectors are mutually orthogonal and,
by Eq.~\eqref{eq:mismatch},
\[
\ket{\psi}
=
\sum_{x\in\bits^n}\ket{\psi_x}.
\]
Connectivity of the Boolean cube therefore gives
\[
1
=
\sum_{x\in\bits^n}\norm{\psi_x}^2
=
2^n\norm{\psi_{0^n}}^2,
\]
so
\begin{equation}
\norm{\psi_x}
=
2^{-n/2}
\qquad
\text{for every }x\in\bits^n.
\label{eq:branchnorm}
\end{equation}

It remains to determine the accumulated signs relating the different
sectors.  Choose any path
\[
x^{(0)}=0^n,\,
x^{(1)},\ldots,x^{(m)}=x
\]
along edges of the Boolean cube.  If the \(r\)th edge flips coordinate
\(i_r\), then
\[
\partial_{i_r}f
\bigl(x^{(r-1)}_{\bar i_r}\bigr)
=
f(x^{(r-1)})
\oplus
f(x^{(r)}).
\]
The derivative values therefore telescope:
\begin{equation}
\bigoplus_{r=1}^{m}
\partial_{i_r}f
\bigl(x^{(r-1)}_{\bar i_r}\bigr)
=
f(x)\oplus f(0^n).
\label{eq:phasetelescope}
\end{equation}
The accumulated sign depends only on the endpoint \(x\), not on the chosen
path.  Iterating Eq.~\eqref{eq:edgeflip}, and using the commutativity and
involutory property of the \(Y_i\), gives
\begin{equation}
\ket{\psi_x}
=
(-1)^{f(x)\oplus f(0^n)}
\left(
\prod_{i=1}^n
Y_i^{x_i}
\right)
\ket{\psi_{0^n}}.
\label{eq:pathintegral}
\end{equation}
Thus the matched computational-basis sectors reproduce the Boolean phase
pattern of the target Choi state.

\subsection{Local isometry and observable extraction}

For every physical party \(v\), let
\[
P_v^a
=
\frac{\id+(-1)^aZ_v}{2},
\qquad
a\in\{0,1\},
\]
and define the local isometry
\begin{equation}
\Phi_v\ket{\xi}
=
\ket{0}_vP_v^0\ket{\xi}
+
\ket{1}_vX_vP_v^1\ket{\xi}.
\label{eq:swap}
\end{equation}
Let
\[
\Phi
=
\bigotimes_v\Phi_v.
\]
Expanding the product isometry and using
Eq.~\eqref{eq:mismatch} removes all terms with unequal \(A_i\) and \(B_i\)
labels.  The remaining terms give
\begin{align}
\Phi\ket{\psi}
&=
\sum_{x\in\bits^n}
\ket{x}_A\ket{x}_B
\left(
\prod_{i=1}^nY_i^{x_i}
\right)
\ket{\psi_x}
\nonumber\\
&=
\sum_{x\in\bits^n}
(-1)^{f(x)\oplus f(0^n)}
\ket{x}_A\ket{x}_B
\ket{\psi_{0^n}}
\nonumber\\
&=
\ket{\Psi_f}\otimes\ket{\mathrm{aux}},
\label{eq:swapextract}
\end{align}
where
\begin{equation}
\ket{\mathrm{aux}}
=
(-1)^{f(0^n)}
2^{n/2}\ket{\psi_{0^n}}
\label{eq:auxstate}
\end{equation}
is normalized by Eq.~\eqref{eq:branchnorm}.  The constant
\(f(0^n)\) has therefore been absorbed into the auxiliary state.

The same isometry extracts the local observables.  Using the spectral
projectors in the definition of \(\Phi_v\), together with the corresponding
anticommutation relations, gives on \(\ket{\psi}\)
\begin{align}
\Phi Z_v\ket{\psi}
&=
\sigma_z^{(v)}
\ket{\Psi_f}\otimes\ket{\mathrm{aux}},
\label{eq:extractZ}\\
\Phi X_v\ket{\psi}
&=
\sigma_x^{(v)}
\ket{\Psi_f}\otimes\ket{\mathrm{aux}}.
\label{eq:extractX}
\end{align}
For the \(B_i\) parties these are the Pauli observables appearing in
Theorem~\ref{thm:bellexact}.  For the \(A_i\) parties,
Eqs.~\eqref{eq:rawaxes} and~\eqref{eq:exactreg} give
\begin{align}
\Phi A_{i,0}\ket{\psi}
&=
\frac{
\sigma_z^{(A_i)}+\sigma_x^{(A_i)}
}{\sqrt2}
\ket{\Psi_f}\otimes\ket{\mathrm{aux}},
\label{eq:extractA0}\\
\Phi A_{i,1}\ket{\psi}
&=
\frac{
\sigma_z^{(A_i)}-\sigma_x^{(A_i)}
}{\sqrt2}
\ket{\Psi_f}\otimes\ket{\mathrm{aux}}.
\label{eq:extractA1}
\end{align}
Eqs.~\eqref{eq:swapextract}--\eqref{eq:extractA1} complete the
state-and-measurement self-testing statement of
Theorem~\ref{thm:bellexact}.

\section{Proof of the robust self-testing bound}
\label{app:robust}

This appendix proves the pairwise operator inequality used in
Sec.~\ref{sec:robust} and completes the proof of
Theorem~\ref{thm:robustselftest}.  We first establish the bound on the
two-dimensional blocks supplied by Jordan's lemma, and then extend the
construction to arbitrary local Hilbert-space dimensions.

\subsection{Local extraction channels}

Apply Jordan's lemma to the two binary observables of each party.  Up to
local basis changes and outcome relabellings, a two-dimensional
\(A\)-side block can be written as
\begin{align}
A_0
&=
\cos a\,Z+\sin a\,X,
\nonumber\\
A_1
&=
\cos a\,Z-\sin a\,X,
\qquad
a\in[0,\pi/2],
\label{eq:robustAblocks}
\end{align}
while on a \(B\)-side block the two physical observables take the form
\begin{align}
\widehat Z_B
&=
Z,
\nonumber\\
\widehat X_B
&=
\cos(2b)\,Z+\sin(2b)\,X,
\qquad
b\in[0,\pi/2].
\label{eq:robustBblocks}
\end{align}
Here the hats distinguish the physical \(B\)-side observables from the
Pauli operators on the extracted qubit used below.  One-dimensional
Jordan blocks are obtained at the corresponding boundary points and may
be treated in the same way.

Define
\begin{equation}
g(t)
=
(1+\sqrt2)
\bigl(\sin t+\cos t-1\bigr),\\
0\le g(t)\le1
\quad
\text{for }t\in[0,\pi/2].
\label{eq:robustg}
\end{equation}
Let \(\Lambda_a^\dagger\) denote the dual \(A\)-side extraction channel.
Its action on the Pauli operators of the extracted qubit is
\begin{equation}
\begin{array}{c|ccc}
&X&Y&Z\\ \hline
0\le a\le\pi/4
&g(a)X&g(a)Y&Z\\
\pi/4\le a\le\pi/2
&X&g(a)Y&g(a)Z .
\end{array}
\label{eq:robustAmap}
\end{equation}
This is the standard CHSH extraction of
Ref.~\cite{Kaniewski2016}.  In the first angular regime it is a dephasing
channel about the \(Z\) axis, while in the second it is a dephasing channel
about the \(X\) axis.

For the \(B\) side, let \(\Gamma_b^\dagger\) be the dual channel defined
by
\[
\Gamma_b^\dagger(O)
=
\frac{1+g(b)}2\,O
+
\frac{1-g(b)}2\,ZOZ .
\]
Since \(0\le g(b)\le1\), this is a unital completely positive channel.  In
particular,
\begin{equation}
\Gamma_b^\dagger(X)=g(b)X,
\qquad
\Gamma_b^\dagger(Y)=g(b)Y,
\qquad
\Gamma_b^\dagger(Z)=Z.
\label{eq:robustBmap}
\end{equation}
The local basis changes that bring the physical observables into
Eqs.~\eqref{eq:robustAblocks} and~\eqref{eq:robustBblocks} are understood
as part of the corresponding extraction channels.

The essential feature of the \(B\)-side map is the exact identity
\[
\Gamma_b^\dagger(Z)=Z.
\]
It consequently fixes both spectral projectors
\((\id\pm Z)/2\).  When the map is applied to the physical \(B_j\) systems,
this property preserves every Boolean-derivative reflection appearing in
Eq.~\eqref{eq:Dpreservationmain}.  This is the step that allows the same
pairwise robustness estimate to be used independently of the particular
Boolean function.

\subsection{Pairwise operator inequality}

Let \(D\) be a Hermitian reflection acting on systems other than the local
systems \(A\) and \(B\), with
\[
[D,A_0]
=
[D,A_1]
=
[D,\widehat Z_B]
=
[D,\widehat X_B]
=
0.
\]
On the two extracted qubits, let \(Z_A,X_A,Z_B,X_B\) denote the target
Pauli operators.  Define
\begin{align}
P_D
&=
\frac14
(\id+Z_AZ_B)
(\id+X_AX_BD),
\label{eq:PD}\\
C_D
&=
A_0(\widehat Z_B+\widehat X_BD)
+
A_1(\widehat Z_B-\widehat X_BD).
\label{eq:CD}
\end{align}
Thus \(P_D\) is the derivative-conditioned target projector, whereas
\(C_D\) is the corresponding physical CHSH operator.

\begin{lemma}[Pairwise operator bound]
\label{lem:conditionalpair}
For the extraction channels in
Eqs.~\eqref{eq:robustAmap} and~\eqref{eq:robustBmap},
\begin{equation}
\begin{aligned}
&\bigl(
\Lambda_a^\dagger
\otimes
\Gamma_b^\dagger
\otimes
\operatorname{id}
\bigr)(P_D)
\\
&\quad\succeq
\id
-
\cZ
\bigl(
2\sqrt2\,\id-C_D
\bigr),
\qquad
\cZ
=
\frac{3(1+\sqrt2)}8 .
\end{aligned}
\label{eq:robustpair}
\end{equation}
\end{lemma}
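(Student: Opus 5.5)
The plan is to reduce the lemma to a single two-qubit (CHSH-type) operator inequality on Jordan blocks, and to remove the derivative reflection \(D\) by splitting into its eigenspaces. Since \(D\) is a Hermitian reflection commuting with \(A_0,A_1,\widehat Z_B,\widehat X_B\), and acts on systems untouched by \(\Lambda_a^\dagger\otimes\Gamma_b^\dagger\), we can write \(D=\Pi_+-\Pi_-\) with spectral projectors \(\Pi_\pm\) commuting with every operator in Eq.~\eqref{eq:robustpair}. Both sides of Eq.~\eqref{eq:robustpair} are then block diagonal, \(O=\Pi_+O\Pi_++\Pi_-O\Pi_-\). It therefore suffices to prove the inequality with \(D\) replaced by the scalar \(+1\), and separately by \(-1\), on the two-qubit \(AB\) block, tensored with \(\Pi_\pm\).

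\emph{Step 1 (reduce \(D=-1\) to \(D=+1\)).} For \(D=-1\), the target projector is \(\tfrac14(\id+Z_AZ_B)(\id-X_AX_B)=Z_A P_{+1} Z_A\) on the extracted qubits. The physical operator is \(C_{-1}=A_1(\widehat Z_B+\widehat X_B)+A_0(\widehat Z_B-\widehat X_B)\), i.e.\ \(C_{+1}\) with \(A_0\leftrightarrow A_1\). In the block form Eq.~\eqref{eq:robustAblocks}, this swap is implemented by conjugation with the physical \(Z\) on the \(A\) block, which sends \(a\mapsto -a\), equivalently \(X\mapsto -X\) in that block. Because the maps in Eq.~\eqref{eq:robustAmap} are dephasings about the \(Z\) or \(X\) axis, they are covariant under \(Z\)-conjugation: \(\Lambda_a^\dagger(ZOZ)=Z\Lambda_a^\dagger(O)Z\). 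Hence the \(D=-1\) inequality is the \(Z_A\)-conjugate of the \(D=+1\) inequality with the same parameter \(a\). It therefore remains to prove, for all \(a,b\in[0,\pi/2]\),
\[
(\Lambda_a^\dagger\otimes\Gamma_b^\dagger)(\proj{\Phi^+})\succeq \id-\cZ\bigl(2\sqrt2\,\id-C_{+1}(a,b)\bigr)
\]
as a \(4\times4\) operator inequality.

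\emph{Step 2 (the two-qubit inequality).} Expand \(\proj{\Phi^+}=\tfrac14(\id+XX-YY+ZZ)\) and apply the tables Eqs.~\eqref{eq:robustAmap} and~\eqref{eq:robustBmap}. This gives an explicit Pauli-diagonal-plus-correction operator with coefficients built from \(g(a)\) and \(g(b)\). On the other side, \(C_{+1}=2\cos a\,Z\otimes Z+2\sin a\,X\otimes(\cos 2b\,Z+\sin 2b\,X)\). The difference \(T(a,b)\) of the two sides is a real \(4\times4\) matrix, and we must show \(T(a,b)\succeq0\). The matrix commutes with \(Y\otimes Y\), so I would first block-diagonalize it in the \(Y\otimes Y=\pm1\) sectors into two \(2\times2\) blocks. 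Positivity then reduces to nonnegativity of trace and determinant of each block, which are trigonometric inequalities in \((a,b)\). These should be handled separately in the two regimes \(a\le\pi/4\) and \(a\ge\pi/4\).

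\emph{Main obstacle.} I expect the main difficulty to be the analytic verification of these inequalities. The \(B\)-side map here is a \(Z\)-dephasing and is not Kaniewski's symmetric map, so his bound cannot be quoted directly. The constant \(\cZ=3(1+\sqrt2)/8\) is presumably the smallest slope for which the determinant condition holds, with tightness at some interior point. My first attempt would be to substitute \(s=\sin t+\cos t\), so that \(g\) becomes linear in \(s\). I would then show that the determinant is a sum of manifestly nonnegative terms, checking the boundary \(a=\pi/4\) and the corner \(b=0\) (where \(g(b)=0\)) separately. If that fails, a monotonicity argument in \(b\) at fixed \(a\) should reduce the check to one variable.

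\emph{Step 3 (arbitrary dimension).} Jordan's lemma decomposes each local space into invariant blocks of dimension at most two. The extraction channels are defined blockwise, with the local basis changes absorbed into them, so the operator inequality holds on every product of blocks. One-dimensional blocks arise as boundary values of \(a,b\) and are covered by continuity. Taking the direct sum over blocks, tensored with \(\Pi_\pm\), gives Eq.~\eqref{eq:robustpair}.
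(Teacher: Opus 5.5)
Your reduction is correct and follows the paper's route. You split into the $D=\pm1$ eigenspaces, map the $D=-1$ case onto $D=+1$ by a local Pauli symmetry, and block-diagonalize with respect to $Y\otimes Y$ (the paper uses $(Z\otimes Z)(X\otimes X)=-Y\otimes Y$). You then treat the two angular regimes of $\Lambda_a^\dagger$ separately, which leaves four $2\times2$ positivity conditions in $(a,b)$. Your symmetry for $D=-1$ differs from the paper's but is equally valid: you conjugate by $Z$ on the $A$ block at fixed $a$, using that Pauli-diagonal channels commute with Pauli conjugation, whereas the paper conjugates by $Z$ on the $B$ block with $b\mapsto\pi/2-b$ and $g(\pi/2-b)=g(b)$.

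The gap is that you never establish these four conditions, and they are the whole content of the lemma. Your Step 2 only lists strategies you might try. The conditions also have no slack. At the ideal point $(a,b)=(\pi/4,\pi/4)$ and at the boundary corners $(a,b)=(\pi/2,0)$ and $(\pi/2,\pi/2)$, the relevant determinants vanish. At $(\pi/2,0)$ one has $g(a)=g(b)=0$, the pulled-back projector is $\id/4$, and $C_{+1}=2X\otimes Z$. With a general slope $c$ in place of $\cZ$, the inequality on the $+1$ eigenspace of $X\otimes Z$ reads $\tfrac14\ge1-c(2\sqrt2-2)$, which is exactly what forces $c\ge\cZ$. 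The extremal point is therefore on the boundary of the domain, not in the interior as you anticipate. Any certificate must be exact there, so neither a coarse estimate nor a numerical scan suffices.

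The paper closes this step with a rigorous computer-assisted argument. Tangent-half-angle variables turn each determinant condition into a bivariate polynomial with coefficients in $\mathbb Q(\sqrt2)$. Exact Taylor bounds fix the sign near the zeros, and nonnegative Bernstein coefficients on a finite rectangle cover handle the remaining compact region. To finish along your lines, you would need either such a certificate or an explicit analytic decomposition of the determinants (e.g.\ a sum of manifestly nonnegative terms) that holds in both regimes and is exact at these zeros.

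A minor point: one-dimensional Jordan blocks follow by compressing the boundary-angle two-dimensional inequality to the appropriate eigenvector, not by continuity.
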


\begin{proof}
We first resolve the Hilbert space into the \(D=\pm1\) eigenspaces.  In
either eigenspace, \(C_D\) is an ordinary CHSH operator; changing the sign
of \(D\) reverses the second physical \(B\)-side observable.  The two cases
are related within the same Jordan family.  Indeed, let
\[
\bar b
=
\frac{\pi}{2}-b.
\]
Then
\[
-\widehat X_B(b)
=
Z\,\widehat X_B(\bar b)\,Z,
\qquad
g(\bar b)=g(b),
\]
and the extraction channels satisfy
\[
\Gamma_{\bar b}^\dagger(ZOZ)
=
Z\Gamma_b^\dagger(O)Z.
\]
The target projector transforms by the same local Pauli conjugation.
Hence it is sufficient to establish Eq.~\eqref{eq:robustpair} in one
representative \(D\) eigenspace.

In that eigenspace, the pulled-back target projector and the CHSH operator
commute with the Hermitian involution
\[
(Z\otimes Z)(X\otimes X).
\]
Resolving its two eigenspaces, together with the two angular regimes of the
\(A\)-side extraction in Eq.~\eqref{eq:robustAmap}, reduces
Eq.~\eqref{eq:robustpair} to four scalar determinant inequalities depending
only on the Jordan angles \(a\) and \(b\).

The remaining positivity check is carried out by a rigorous
computer-assisted argument.  Using the tangent-half-angle parametrization
for \(a\) and \(b\), and multiplying by strictly positive denominators,
the four determinant conditions become bivariate polynomials with
coefficients in \(\mathbb Q(\sqrt2)\).  Their nonnegativity is established
on the full Jordan domain in two parts.  Near the points where the
polynomials vanish, exact Taylor bounds give the required sign.  The
compact complement is covered by finitely many rectangles with rational
endpoints.  On each rectangle, the corresponding polynomial is expressed
in the Bernstein basis after an affine rescaling of the variables.  All
Bernstein coefficients are verified to be nonnegative using exact
arithmetic over \(\mathbb Q(\sqrt2)\).

Because the Bernstein basis functions are nonnegative and form a partition
of unity, nonnegative Bernstein coefficients imply nonnegativity of the
polynomial throughout the rectangle.  The Taylor neighborhoods and the
finite rectangle cover exhaust the full parameter domain, proving
Eq.~\eqref{eq:robustpair}.  The explicit polynomials, exact local bounds,
rectangle cover, Bernstein coefficients, and verification scripts are
provided in the accompanying proof repository.
\end{proof}

For the Boolean-phase Bell test, set
\begin{equation}
D
=
D_i^f(\bm Z_{B_{\bar i}})
\label{eq:Dsubstitution}
\end{equation}
and identify \(A=A_i\), together with
\[
\widehat Z_B=Z_{B_i},
\qquad
\widehat X_B=X_{B_i}.
\]
Equation~\eqref{eq:Dpreservationmain} ensures that the extraction channels
on the other \(B\) systems leave
\(D_i^f(\bm Z_{B_{\bar i}})\) unchanged.  With these identifications,
\[
P_D
=
P_i^f,
\qquad
C_D
=
\mathcal C_i^f,
\]
and Lemma~\ref{lem:conditionalpair} gives precisely
Eq.~\eqref{eq:pairtradeoffmain}.

\subsection{Arbitrary local dimensions and tightness of the affine bound}

Jordan's lemma reduces a pair of binary reflections to invariant subspaces
of dimension at most two.  In finite dimensions this gives a direct-sum
decomposition, and the extraction channels above are applied separately on
each block while retaining the block label as a local auxiliary system.
Since Eq.~\eqref{eq:robustpair} holds on every block, it also holds for
their direct sum.

The same argument extends to infinite-dimensional local Hilbert spaces
through the corresponding direct-integral decomposition of the algebra
generated by two reflections.  The local channels act fiberwise as
functions of the Jordan parameter, and the fiber label is retained in the
auxiliary system.  The blockwise operator inequality therefore passes to
the direct integral.  Applying this construction independently at every
party gives the product extraction \(\Lambda\) used in
Sec.~\ref{sec:robust}.  Together with
Eqs.~\eqref{eq:projectorunionmain} and~\eqref{eq:pairtradeoffmain}, this
yields Eq.~\eqref{eq:globaltradeoffmain}, and hence
\[
F^2(\rho_{\rm ext},\proj{\Psi_f})
\ge
1-\cZ\eps .
\]
This proves Theorem~\ref{thm:robustselftest}.

We finally show that the coefficient \(\cZ\) cannot be decreased for the
chosen extraction channels within an affine operator inequality of the
form Eq.~\eqref{eq:robustpair}.  Consider the boundary point
\[
a=\frac{\pi}{2},
\qquad
b=0.
\]
There the pulled-back pair projector and CHSH operator reduce to
\begin{equation}
K_{a,b}
=
\frac{\id}{4},
\qquad
C_{a,b}
=
2X\otimes Z.
\label{eq:robustendpoint}
\end{equation}
On the \(+1\) eigenspace of \(X\otimes Z\), an affine inequality with
slope \(s\) requires
\begin{equation}
\frac14
\ge
1-s(2\sqrt2-2),
\label{eq:robustendpointbound}
\end{equation}
and therefore
\begin{equation}
s
\ge
\frac{3(1+\sqrt2)}8
=
\cZ.
\label{eq:robusttight}
\end{equation}
Thus the coefficient in Eq.~\eqref{eq:robustpair} is tight for the chosen
extraction within this affine state-operator bound.  This endpoint argument
does not establish optimality over all possible extraction channels or
over nonaffine robustness bounds.

\section{Proof of the gate-certification bound}
\label{app:gate}

This appendix completes the proof of
Corollary~\ref{cor:gatefidelity}.  The robust state bounds are supplied by
Theorem~\ref{thm:robustselftest}.  We first identify the two state tests
with the channel-composition construction of Ref.~\cite{Sekatski2018},
and then show how source independence allows the input interface to be
chosen as a product of local channels.

\subsection{Common reference extraction and channel composition}

Let
\[
\Lambda_A
=
\bigotimes_{i=1}^n\Lambda_{A_i}
\]
be the reference-side extraction used in both test routes, and let
\[
\Lambda_Q
=
\bigotimes_{i=1}^n\Lambda_{Q_i},
\qquad
\Lambda_R
=
\bigotimes_{i=1}^n\Lambda_{R_i}
\]
denote the extractions on the routed systems in the identity and
gate-output tests, respectively.  Define the corresponding extracted
states
\[
\rho_{\rm in}^{\rm ext}
=
(\Lambda_A\otimes\Lambda_Q)(\rho_{AQ})
\]
and
\[
\rho_{\rm out}^{\rm ext}
=
(\Lambda_A\otimes\Lambda_R)
\bigl[
(\id_A\otimes\mathcal E)(\rho_{AQ})
\bigr].
\]
By Eqs.~\eqref{eq:Fin} and~\eqref{eq:Fout}, their fidelities with the
ideal input and output Choi states are \(F_{\rm in}\) and \(F_{\rm out}\).

The correspondence with the channel-composition framework is then direct.
The systems \(A\) provide the common reference, \(Q\) and \(R\) are the
physical channel inputs and outputs, and
\[
\ket{\Psi_f}
=
(\id\otimes U_f)\ket{\Phi_d}
\]
is the ideal output Choi state.  The essential point is that the same
reference map \(\Lambda_A\) appears in both extracted states.  In the
network of Sec.~\ref{subsec:network}, this is enforced operationally:
the route choice is unavailable to the \(A_i\) devices, and each
\(\Lambda_{A_i}\) depends only on the same two physical observables
\(A_{i,0}\) and \(A_{i,1}\).

Define the Bures angle
\begin{equation}
A(\rho,\sigma)
=
\arccos F(\rho,\sigma).
\label{eq:buresangle}
\end{equation}
Applying the channel-composition result of Ref.~\cite{Sekatski2018} to
the two state certificates with their common reference extraction gives
input and output interfaces for which
\begin{equation}
\arccos
F_{\rm Choi}(\mathcal E_{\rm eff},\mathcal U_f)
\le
\arccos F_{\rm in}
+
\arccos F_{\rm out}.
\label{eq:bureschannel}
\end{equation}
The inequality follows from the triangle inequality for the Bures angle
together with its contractivity under quantum channels.  It is the step
that converts the two compatible state statements into a statement about
one effective channel.

If
\[
\arccos F_{\rm in}
+
\arccos F_{\rm out}
\le
\frac{\pi}{2},
\]
taking the cosine of Eq.~\eqref{eq:bureschannel} yields
\[
F_{\rm Choi}(\mathcal E_{\rm eff},\mathcal U_f)
\ge
\cos\!\left[
\arccos F_{\rm in}
+
\arccos F_{\rm out}
\right].
\]
If the angle sum exceeds \(\pi/2\), the right-hand side is nonpositive
and the trivial bound \(F_{\rm Choi}\ge0\) is stronger.  Hence
\begin{equation}
F_{\rm Choi}(\mathcal E_{\rm eff},\mathcal U_f)
\ge
\max\!\left\{
0,\,
\cos\!\left[
\arccos F_{\rm in}
+
\arccos F_{\rm out}
\right]
\right\},
\label{eq:bureschannelpositive}
\end{equation}
which is Eq.~\eqref{eq:channelF}.

Source independence is not required for the Bures-angle composition in
Eq.~\eqref{eq:bureschannel}.  Its role is instead to ensure that the input
interface appearing in the resulting channel statement can be chosen as a
product of local channels.

\subsection{Product input interfaces from source independence}

The input interface produced by the composition argument may use the fixed
physical source state as an ancillary resource.  For input port \(i\), the
corresponding local preprocessing may be represented by a CPTP map
\begin{equation}
\widetilde{\mathcal I}_i:
\mathsf L(\mathcal H_{Q_i}\otimes\mathbb C^2)
\longrightarrow
\mathsf L(\mathcal H_{Q_i}),
\label{eq:preinjection}
\end{equation}
where the second input is the ideal qubit to be injected and the first is
supplied with the fixed physical marginal
\[
\rho_{Q_i}
=
\Tr_{A_i}\rho_{A_iQ_i}.
\]

By the independent-source condition
Eq.~\eqref{eq:independence},
\begin{equation}
\rho_Q
=
\bigotimes_{i=1}^n\rho_{Q_i}.
\label{eq:productQmarginal}
\end{equation}
The ancillary state needed at each port can therefore be supplied locally
and absorbed into the corresponding preprocessing map.  Define
\begin{equation}
\mathcal I_i(\tau_i)
=
\widetilde{\mathcal I}_i
\bigl(
\rho_{Q_i}\otimes\tau_i
\bigr).
\label{eq:absorbedinjection}
\end{equation}
Since \(\rho_{Q_i}\) is fixed and normalized and
\(\widetilde{\mathcal I}_i\) is CPTP,
Eq.~\eqref{eq:absorbedinjection} defines a CPTP channel
\[
\mathcal I_i:
\mathsf L(\mathbb C^2)
\longrightarrow
\mathsf L(\mathcal H_{Q_i}).
\]
The complete input interface therefore has the product form
\begin{equation}
\mathcal I
=
\bigotimes_{i=1}^n\mathcal I_i.
\label{eq:productinjectionapp}
\end{equation}

The output interface is likewise the product of the local output
extractions,
\[
\mathcal O
=
\bigotimes_{i=1}^n\mathcal O_i.
\]
Together, these maps define the effective channel
\[
\mathcal E_{\rm eff}
=
\mathcal O\circ\mathcal E\circ\mathcal I
\]
of Eq.~\eqref{eq:Eeff}.

This product-interface interpretation is the point at which source
independence is needed.  If the physical input marginal \(\rho_Q\) were
entangled across different input ports, the ancillary state required by
the composition argument could not in general be prepared independently
inside the local maps \(\mathcal I_i\).  The same pair of state
certificates could then characterize a source-assisted implementation,
rather than an effective channel with product-local input interfaces.

Finally, Theorem~\ref{thm:robustselftest} gives
\[
F_{\rm in}
\ge
\sqrt{1-\cZ\eps_{\rm in}},
\qquad
F_{\rm out}
\ge
\sqrt{1-\cZ\eps_{\rm out}}
\]
whenever
\[
\cZ\eps_{\rm in}\le1,
\qquad
\cZ\eps_{\rm out}\le1.
\]
Substituting these bounds into
Eq.~\eqref{eq:bureschannelpositive} gives
Eq.~\eqref{eq:channelDeficitBound} and completes the proof of
Corollary~\ref{cor:gatefidelity}.

\section{Improved robustness bound for CCZ}
\label{app:ccz}

This appendix proves Theorem~\ref{thm:cczrobust}.  We use the standard
CHSH product extraction but retain the three CCZ derivative terms jointly.
After Jordan reduction, the problem becomes a six-parameter operator
inequality.  Its positivity over the full parameter domain is established
by a rigorous computer-assisted proof using exact local estimates and
finite Bernstein-polynomial certificates.  The complete machine-readable
certificates are provided in the accompanying verification repository.

\subsection{Joint CCZ operator bound}

By Jordan's lemma, up to local basis changes and outcome relabellings, the
two observables of each of the six parties can be written on every
two-dimensional block as
\begin{align}
M_0(\theta)
&=
\cos\theta\,Z+\sin\theta\,X,
\nonumber\\
M_1(\theta)
&=
\cos\theta\,Z-\sin\theta\,X,
\qquad
0\le\theta\le\frac{\pi}{2}.
\label{eq:cczjordan}
\end{align}
We denote the six Jordan angles by
\[
(a_1,a_2,a_3,b_1,b_2,b_3)
\in
[0,\pi/2]^6.
\]
One-dimensional Jordan blocks are included at the corresponding boundary
points.

Define
\begin{equation}
g(\theta)
=
(1+\sqrt2)
\bigl(\cos\theta+\sin\theta-1\bigr)
\label{eq:kaniewskig}
\end{equation}
and use the standard CHSH extraction of
Ref.~\cite{Kaniewski2016},
\begin{equation}
\begin{aligned}
\Lambda_\theta^\dagger(O)
&=
\frac{1+g(\theta)}2\,O
+
\frac{1-g(\theta)}2\,
\Gamma_\theta O\Gamma_\theta,
\\
\Gamma_\theta
&=
\begin{cases}
Z,&0\le\theta\le\pi/4,\\
X,&\pi/4\le\theta\le\pi/2.
\end{cases}
\end{aligned}
\label{eq:kaniewskich}
\end{equation}
Since \(0\le g(\theta)\le1\), each
\(\Lambda_\theta^\dagger\) is a unital completely positive map.  Unlike
the \(Z\)-preserving extraction used for the uniform Boolean-phase bound
in Appendix~\ref{app:robust}, the preserved Pauli axis here changes at
\(\theta=\pi/4\).  For CCZ, the resulting cross-pair terms are retained
and controlled jointly rather than bounded separately.

For the three extracted \(B\)-side qubits, it is convenient to use the
fixed output frame
\[
W_B
=
ZR_y(-\pi/4),
\qquad
R_y(\vartheta)
=
e^{-i\vartheta Y/2},
\]
and define
\[
\ket{\widetilde\Psi_{\rm CCZ}}
=
(\id_A\otimes W_B^{\otimes3})
\ket{\Psi_{\rm CCZ}}.
\]
This is only a fixed local change of output frame and can be absorbed into
the final extraction channels.  With
\[
\Lambda^\dagger
=
\bigotimes_{i=1}^3
\left(
\Lambda_{a_i}^\dagger
\otimes
\Lambda_{b_i}^\dagger
\right),
\]
define
\begin{equation}
\begin{aligned}
K_{\rm CCZ}
&=
\Lambda^\dagger
\bigl(\proj{\widetilde\Psi_{\rm CCZ}}\bigr),
\\
T_{\rm CCZ}
&=
K_{\rm CCZ}
-
\id
+
\cCCZ
\bigl(
6\sqrt2\,\id-\mathcal B_{\rm CCZ}
\bigr).
\end{aligned}
\label{eq:cczT}
\end{equation}
Here \(\mathcal B_{\rm CCZ}\) denotes the Bell operator restricted to the
current Jordan block.  It remains to prove
\[
T_{\rm CCZ}\succeq0
\]
for every point of the six-angle domain.

For the \(i\)th Bell block, let
\[
D_i
=
\CZ_{B_jB_k},
\]
where \(j\) and \(k\) are the two indices different from \(i\).  In the
two eigensectors \(D_i=\pm1\), the corresponding term is an ordinary CHSH
operator, with the second effective \(B_i\)-side observable reversed
between the two sectors.  These two cases are related within the same
extraction family.  Indeed, for
\[
\bar\theta
=
\frac{\pi}{2}-\theta,
\]
one has
\begin{equation}
\begin{aligned}
M_0(\bar\theta)
&=
HM_0(\theta)H,
\\
M_1(\bar\theta)
&=
-HM_1(\theta)H,
\\
\Lambda_{\bar\theta}^\dagger(HOH)
&=
H\Lambda_\theta^\dagger(O)H.
\end{aligned}
\label{eq:sectorcovariance}
\end{equation}
Moreover,
\[
W_B^\dagger H W_B=Z,
\]
so in the fixed output frame the extracted Bell-pair targets in the two
derivative sectors differ only by a local Pauli \(Z\) on Bob's qubit.

Let \(\widetilde K_i\) denote the direct sum over the two eigensectors of
\(D_i\) of the corresponding extracted Bell-pair target operators in this
frame.  The standard CHSH state-operator inequality
~\cite{Kaniewski2016} gives
\begin{equation}
T_i^{\rm eff}
=
\widetilde K_i
-
\id
+
\cCCZ
\bigl(
2\sqrt2\,\id-\mathcal C_i^{\rm CCZ}
\bigr)
\succeq0.
\label{eq:effectivechsh}
\end{equation}

The ideal CCZ projector can be written as a product of commuting
generalized-stabilizer projectors.  The dual extraction map, however, is
not multiplicative, so the three separately extracted pair operators do
not determine \(K_{\rm CCZ}\).  We isolate the remaining joint
contribution by defining
\begin{equation}
C_{1/2}
=
T_{\rm CCZ}
-
\frac12
\sum_{i=1}^3
T_i^{\rm eff}.
\label{eq:cczcoupling}
\end{equation}
It is therefore sufficient to establish
\[
C_{1/2}\succeq0
\]
throughout the six-angle domain.

The change of extraction axis in
Eq.~\eqref{eq:kaniewskich} partitions the domain into
\[
2^6=64
\]
closed angular regions.  More explicitly, for
\(\boldsymbol s=(s_1,\ldots,s_6)\in\{-,+\}^6\), let
\[
\mathcal R_{\boldsymbol s}
=
I_{s_1}\times\cdots\times I_{s_6},
\]
where
\[
I_-=[0,\pi/4],
\qquad
I_+=[\pi/4,\pi/2].
\]
On each \(\mathcal R_{\boldsymbol s}\), all six conjugating Pauli
operators \(\Gamma_\theta\) are fixed, so
\(C_{1/2}\) is analytic in the six Jordan angles.

On each angular region, we use the tangent-half-angle variables
\[
t_\ell
=
\tan\frac{\theta_\ell}{2},
\qquad
\ell=1,\ldots,6.
\]
After multiplication by a strictly positive common denominator,
\(C_{1/2}\) becomes a real symmetric \(64\times64\) matrix polynomial
whose entries lie in \(\mathbb Q(\sqrt2)\) and whose degree is at most two
in each variable.  Multiplication by the positive denominator does not
change positive semidefiniteness.

The equality configurations, where the matrix has vanishing eigenvalues,
are treated separately.  In neighborhoods of these points, exact Taylor
bounds together with Schur-complement estimates establish
positive semidefiniteness.  Removing these neighborhoods leaves a compact
subset of each angular region.  The remaining sets are covered by finitely
many dyadic boxes.

For each such box \(Q\), an affine rescaling maps the variables to
\([0,1]^6\), and the matrix polynomial is written in the tensor-product
Bernstein basis,
\[
\widehat C_Q(\bm t)
=
\sum_{\alpha}
B_{\alpha,Q}\,
b_{\alpha,Q}(\bm t),
\]
where
\[
b_{\alpha,Q}(\bm t)\ge0,
\qquad
\sum_\alpha b_{\alpha,Q}(\bm t)=1.
\]
Consequently,
\[
B_{\alpha,Q}\succeq0
\quad
\text{for every }\alpha
\]
implies
\[
\widehat C_Q(\bm t)\succeq0
\qquad
\text{throughout }Q.
\]

For completeness, the validation of each Bernstein coefficient matrix
uses a rigorous spectral-error criterion.  If a symmetric matrix
\(B_{\alpha,Q}\) has a certified approximation
\(\widetilde B_{\alpha,Q}\) satisfying
\[
\lambda_{\min}(\widetilde B_{\alpha,Q})
\ge
\ell_{\alpha,Q},
\qquad
\left\|
B_{\alpha,Q}
-
\widetilde B_{\alpha,Q}
\right\|_{\rm op}
\le
\eta_{\alpha,Q},
\]
with
\[
\ell_{\alpha,Q}
\ge
\eta_{\alpha,Q}
\ge0,
\]
then Weyl's inequality gives
\[
\lambda_{\min}(B_{\alpha,Q})
\ge
\ell_{\alpha,Q}-\eta_{\alpha,Q}
\ge0.
\]
The verification records supply the factorization data and rigorous bounds
needed for these inequalities.  All arithmetic associated with the exact
matrix-polynomial coefficients, box endpoints, and symbolic reductions is
performed over rational numbers and \(\mathbb Q(\sqrt2)\); rounding errors
in the matrix factorizations are enclosed by the stated operator-norm
bounds.

The full set of \(64\) angular regions is covered in the certificate.
Permutation symmetry among the three \(A_iB_i\) pairs and the covariance
in Eq.~\eqref{eq:sectorcovariance} are used only to reduce duplicate
computations.  For every angular region, the verification data record
either a direct certificate or an explicit symmetry transformation to a
certified representative region.  Since these transformations act by
parameter relabellings and unitary conjugations, they preserve positive
semidefiniteness.  Thus no angular region is omitted from the verification.

Combining the exact Taylor--Schur neighborhoods with the finite
Bernstein-certified box covers proves
\[
C_{1/2}\succeq0
\]
over the entire six-angle domain.  The exact matrix polynomials, the local
Taylor--Schur certificates, the complete angular-region table, the dyadic
box covers, the Bernstein coefficient matrices, and the validated
factorization records are supplied in the accompanying verification
repository.

It follows from Eq.~\eqref{eq:cczcoupling} and
Eq.~\eqref{eq:effectivechsh} that
\begin{equation}
T_{\rm CCZ}
=
C_{1/2}
+
\frac12
\sum_{i=1}^3
T_i^{\rm eff}
\succeq0.
\label{eq:cczglobalpositive}
\end{equation}

The same direct-integral argument as in Appendix~\ref{app:robust} extends
the blockwise inequality to arbitrary local Hilbert-space dimensions:
the Jordan-block extraction channels act fiberwise, while the block labels
are retained as local auxiliary systems.  Taking the expectation value of
Eq.~\eqref{eq:cczglobalpositive} in the physical state gives
\[
F^2
\bigl(
\rho_{\rm ext},
\proj{\widetilde\Psi_{\rm CCZ}}
\bigr)
\ge
1-\cCCZ\eps_{\rm CCZ}.
\]
Absorbing the fixed rotation \(W_B^{\otimes3}\) into the \(B\)-side output
maps gives
\[
F^2
\bigl(
\rho_{\rm ext},
\proj{\Psi_{\rm CCZ}}
\bigr)
\ge
1-\cCCZ\eps_{\rm CCZ},
\]
which proves Theorem~\ref{thm:cczrobust}.

\subsection{Tightness and visibility thresholds}

We first show that the coefficient \(\cCCZ\) cannot be decreased for the
chosen product extraction within the affine state-operator inequality used
above.  Consider a boundary configuration in which one extracted pair is
nonideal while the other two conditional CHSH pairs attain their ideal
values.  In the relevant matched sector, the extracted target weight is
\[
k_+
=
\frac{2+\sqrt2}{8}.
\]
The Bell deficit contributed by the nonideal pair is
\(2\sqrt2-2\).  Hence an affine inequality with slope \(s\) must satisfy
\[
k_+-1+s(2\sqrt2-2)\ge0.
\]
Therefore
\begin{equation}
s
\ge
\frac{1-k_+}{2\sqrt2-2}
=
\frac{4+5\sqrt2}{16}
=
\cCCZ.
\label{eq:ccztight}
\end{equation}
Thus \(\cCCZ\) is tight for the chosen extraction within this affine
state-operator family.  This argument does not establish optimality over
all extraction channels or over more general, including nonaffine,
robustness bounds.

For the identity route, the extracted target factorizes into three Bell
pairs.  If \(K_i\) denotes the extracted target operator for the \(i\)th
pair, then the \(K_i\) act on different extracted pairs and hence commute,
with
\[
0\preceq K_i\preceq\id.
\]
Therefore
\[
\id-\prod_{i=1}^3K_i
\preceq
\sum_{i=1}^3(\id-K_i).
\]
Applying the CHSH state-operator inequality
Eq.~\eqref{eq:effectivechsh} to the three ordinary CHSH pairs, and
Theorem~\ref{thm:cczrobust} to the gate-output route, gives
\begin{equation}
F_{\rm in}^2
\ge
1-\cCCZ\eps_{\rm in},
\qquad
F_{\rm out}^2
\ge
1-\cCCZ\eps_{\rm out}.
\label{eq:cczbranchboundsapp}
\end{equation}
The \(A_i\)-side extraction depends only on the same physical reference
observables reused in the two routes, so
Eq.~\eqref{eq:cczbranchboundsapp} is precisely the bound stated in
Eq.~\eqref{eq:cczFinFout}.

For a standalone CCZ Choi-state test, one may additionally use the constant
product extraction that outputs \(\ket{0}^{\otimes6}\).  Since
\[
\left|
\braket{0^6}{\Psi_{\rm CCZ}}
\right|^2
=
\frac18,
\]
this gives the realization-independent standalone bound
\begin{equation}
F_{\rm stand}^2
\ge
\max\left\{
\frac18,\,
1-\cCCZ\eps_{\rm CCZ}
\right\}.
\label{eq:robustFfloor}
\end{equation}
The \(1/8\) floor is not used in the gate certificate, because a constant
preparation does not provide the shared reference extraction required to
compose the input and gate-output state tests.

Finally, consider the symmetric score-visibility model used in
Sec.~\ref{sec:ccz},
\[
\langle\mathcal B_0\rangle_{e=0}
=
\langle\mathcal B_{\rm CCZ}\rangle_{e=1}
=
6\sqrt2\,v.
\]
Here \(v\) parametrizes the observed Bell scores and need not correspond
to the depolarizing parameter \(v_{\rm dep}\) introduced in
Eq.~\eqref{eq:white}.

For an affine robustness coefficient \(c\), define
\[
\delta_c(v)
=
6\sqrt2\,c(1-v).
\]
The fixed-extraction state and effective-channel bounds are
\begin{equation}
\begin{aligned}
F_{\rm state}^2(v;c)
&\ge
\max\left\{
0,\,
1-\delta_c(v)
\right\},
\\
F_{\rm Choi}^2(v;c)
&\ge
\max\left\{
0,\,
1-2\delta_c(v)
\right\}^2.
\end{aligned}
\label{eq:cczvisibilitycurves}
\end{equation}
The second line follows from the symmetric Bures-angle composition:
if
\[
F_{\rm in}^2,F_{\rm out}^2
\ge
1-\delta_c(v),
\]
then
\[
F_{\rm Choi}
\ge
\max\{0,1-2\delta_c(v)\}.
\]

Figure~\ref{fig:cczrobustness} evaluates
Eq.~\eqref{eq:cczvisibilitycurves} for the uniform Boolean-phase
coefficient \(c_Z\) and the CCZ-specific coefficient \(c_{\rm CCZ}\).
For a general coefficient \(c\), the corresponding thresholds are
\[
v_{\rm state}(c)
=
1-\frac{1}{6\sqrt2\,c},
\qquad
v_{\rm gate}(c)
=
1-\frac{1}{12\sqrt2\,c}.
\]
Thus
\begin{equation}
\begin{aligned}
v_{\rm state}^{\rm uni}
&=
1-\frac{1}{6\sqrt2\,c_Z}
=
0.8698252361\ldots,
\\
v_{\rm state}^{\rm CCZ}
&=
1-\frac{1}{6\sqrt2\,c_{\rm CCZ}}
=
0.8296805588\ldots,
\\
v_{\rm gate}^{\rm uni}
&=
1-\frac{1}{12\sqrt2\,c_Z}
=
0.9349126180\ldots,
\\
v_{\rm gate}^{\rm CCZ}
&=
1-\frac{1}{12\sqrt2\,c_{\rm CCZ}}
=
0.9148402794\ldots .
\end{aligned}
\label{eq:cczvisibilitythresholds}
\end{equation}
The CCZ-specific analysis therefore lowers both thresholds relative to the
uniform Boolean-phase bound.  The larger gate thresholds reflect the
composition of two imperfect state certificates through the Bures-angle
bound.

\bibliography{main}

@article{Bell1964,
  author = {Bell, John S.},
  title = {On the {Einstein Podolsky Rosen} paradox},
  journal = {Physics Physique Fizika},
  volume = {1},
  number = {3},
  pages = {195--200},
  year = {1964},
  doi = {10.1103/PhysicsPhysiqueFizika.1.195}
}

@article{CHSH1969,
  author = {Clauser, John F. and Horne, Michael A. and Shimony, Abner and Holt, Richard A.},
  title = {Proposed experiment to test local hidden-variable theories},
  journal = {Phys. Rev. Lett.},
  volume = {23},
  number = {15},
  pages = {880--884},
  year = {1969},
  doi = {10.1103/PhysRevLett.23.880}
}

@article{Tsirelson1980,
  author = {Cirel'son, Boris S.},
  title = {Quantum generalizations of {Bell}'s inequality},
  journal = {Lett. Math. Phys.},
  volume = {4},
  number = {2},
  pages = {93--100},
  year = {1980},
  doi = {10.1007/BF00417500}
}

@article{PopescuRohrlich1992,
  author  = {Popescu, Sandu and Rohrlich, Daniel},
  title   = {Which states violate {Bell}'s inequality maximally?},
  journal = {Physics Letters A},
  volume  = {169},
  number  = {6},
  pages   = {411--414},
  year    = {1992},
  doi     = {10.1016/0375-9601(92)90819-8}
}

@article{MayersYao2004,
  author = {Mayers, Dominic and Yao, Andrew},
  title = {Self testing quantum apparatus},
  journal = {Quantum Inf. Comput.},
  volume = {4},
  number = {4},
  pages = {273--286},
  year = {2004},
}

@article{McKagueYangScarani2012,
  author = {McKague, Matthew and Yang, Tzyh Haur and Scarani, Valerio},
  title = {Robust self-testing of the singlet},
  journal = {J. Phys. A: Math. Theor.},
  volume = {45},
  number = {45},
  pages = {455304},
  year = {2012},
  doi = {10.1088/1751-8113/45/45/455304},
}

@article{ReichardtUngerVazirani2013,
  author = {Reichardt, Ben W. and Unger, Falk and Vazirani, Umesh},
  title = {Classical command of quantum systems},
  journal = {Nature},
  volume = {496},
  number = {7446},
  pages = {456--460},
  year = {2013},
  doi = {10.1038/nature12035},
}

@article{SupicBowles2020,
  author = {{\v{S}}upi{\'c}, Ivan and Bowles, Joseph},
  title = {Self-testing of quantum systems: a review},
  journal = {Quantum},
  volume = {4},
  pages = {337},
  year = {2020},
  doi = {10.22331/q-2020-09-30-337},
}

@article{BalanzoJuando2026,
  author = {Balanz{\'o}-Juand{\'o}, Maria and Coladangelo, Andrea and Augusiak, Remigiusz and Ac{\'i}n, Antonio and {\v S}upi{\'c}, Ivan},
  title = {All pure multipartite entangled states of qubits can be self-tested},
  journal = {Nat. Commun.},
  volume = {17},
  pages = {4463},
  year = {2026},
  doi = {10.1038/s41467-026-70829-x},
}

@misc{Liu2026Scalable,
  author = {Liu, Jinchang and Huber, Elias X. and Du, Zhenyu and Zhang, Xingjian and Ma, Xiongfeng},
  title = {Scalable self-testing of generic multipartite quantum states},
  year = {2026},
  eprint = {2605.15106},
  archiveprefix = {arXiv},
  primaryclass = {quant-ph},
  doi = {10.48550/arXiv.2605.15106}
}

@article{YangVertesiBancal2014,
  author = {Yang, Tzyh Haur and V{\'e}rtesi, Tam{\'a}s and Bancal, Jean-Daniel and Scarani, Valerio and Navascu{\'e}s, Miguel},
  title = {Robust and versatile black-box certification of quantum devices},
  journal = {Phys. Rev. Lett.},
  volume = {113},
  number = {4},
  pages = {040401},
  year = {2014},
  doi = {10.1103/PhysRevLett.113.040401},
}

@article{Bancal2015,
  author = {Bancal, Jean-Daniel and Navascu{\'e}s, Miguel and Scarani, Valerio and V{\'e}rtesi, Tam{\'a}s and Yang, Tzyh Haur},
  title = {Physical characterization of quantum devices from nonlocal correlations},
  journal = {Phys. Rev. A},
  volume = {91},
  number = {2},
  pages = {022115},
  year = {2015},
  doi = {10.1103/PhysRevA.91.022115},
}

@article{BampsPironio2015,
  author = {Bamps, C{\'e}dric and Pironio, Stefano},
  title = {Sum-of-squares decompositions for a family of {CHSH}-like inequalities and their application to self-testing},
  journal = {Phys. Rev. A},
  volume = {91},
  number = {5},
  pages = {052111},
  year = {2015},
  doi = {10.1103/PhysRevA.91.052111},
}

@article{Kaniewski2016,
  author = {Kaniewski, J{\k{e}}drzej},
  title = {Analytic and nearly optimal self-testing bounds for the {Clauser--Horne--Shimony--Holt} and {Mermin} inequalities},
  journal = {Phys. Rev. Lett.},
  volume = {117},
  number = {7},
  pages = {070402},
  year = {2016},
  doi = {10.1103/PhysRevLett.117.070402},
}

@article{WangWuScarani2016,
  author = {Wang, Yukun and Wu, Xingyao and Scarani, Valerio},
  title = {All the self-testings of the singlet for two binary measurements},
  journal = {New J. Phys.},
  volume = {18},
  number = {2},
  pages = {025021},
  year = {2016},
  doi = {10.1088/1367-2630/18/2/025021},
}

@inproceedings{McKague2011,
  author = {McKague, Matthew},
  title = {Self-testing graph states},
  booktitle = {Theory of Quantum Computation, Communication, and Cryptography},
  editor = {Bacon, Dave and Martin-Delgado, Miguel and Roetteler, Martin},
  series = {Lecture Notes in Computer Science},
  volume = {6745},
  pages = {104--120},
  publisher = {Springer},
  year = {2014},
  doi = {10.1007/978-3-642-54429-3_7},
}

@article{Wu2014,
  author = {Wu, Xingyao and Cai, Yu and Yang, Tzyh Haur and Le, Huy Nguyen and Bancal, Jean-Daniel and Scarani, Valerio},
  title = {Robust self-testing of the three-qubit {$W$} state},
  journal = {Phys. Rev. A},
  volume = {90},
  number = {4},
  pages = {042339},
  year = {2014},
  doi = {10.1103/PhysRevA.90.042339},
}

@article{Li2018,
  author = {Li, Xinhui and Cai, Yu and Han, Yunguang and Wen, Qiaoyan and Scarani, Valerio},
  title = {Self-testing using only marginal information},
  journal = {Phys. Rev. A},
  volume = {98},
  number = {5},
  pages = {052331},
  year = {2018},
  doi = {10.1103/PhysRevA.98.052331},
}

@article{Li2020,
  author = {Li, Xinhui and Wang, Yukun and Han, Yunguang and Qin, Su-Juan and Gao, Fei and Wen, Qiaoyan},
  title = {Self-testing of symmetric three-qubit states},
  journal = {IEEE J. Sel. Areas Commun.},
  volume = {38},
  number = {3},
  pages = {589--597},
  year = {2020},
  doi = {10.1109/JSAC.2020.2968994},
}

@article{Coladangelo2017,
  author = {Coladangelo, Andrea and Goh, Koon Tong and Scarani, Valerio},
  title = {All pure bipartite entangled states can be self-tested},
  journal = {Nat. Commun.},
  volume = {8},
  pages = {15485},
  year = {2017},
  doi = {10.1038/ncomms15485},
}

@article{Baccari2020,
  author = {Baccari, Flavio and Augusiak, Remigiusz and {\v{S}}upi{\'c}, Ivan and Tura, Jordi and Ac{\'i}n, Antonio},
  title = {Scalable {Bell} inequalities for qubit graph states and robust self-testing},
  journal = {Phys. Rev. Lett.},
  volume = {124},
  number = {2},
  pages = {020402},
  year = {2020},
  doi = {10.1103/PhysRevLett.124.020402},
}

@article{Jamiolkowski1972,
  author = {Jamio{\l}kowski, Andrzej},
  title = {Linear transformations which preserve trace and positive semidefiniteness of operators},
  journal = {Rep. Math. Phys.},
  volume = {3},
  number = {4},
  pages = {275--278},
  year = {1972},
  doi = {10.1016/0034-4877(72)90011-0}
}

@article{Choi1975,
  author = {Choi, Man-Duen},
  title = {Completely positive linear maps on complex matrices},
  journal = {Linear Algebra Appl.},
  volume = {10},
  number = {3},
  pages = {285--290},
  year = {1975},
  doi = {10.1016/0024-3795(75)90075-0}
}

@article{DallArno2017,
  author = {Dall'Arno, Michele and Brandsen, Sarah and Buscemi, Francesco},
  title = {Device-independent tests of quantum channels},
  journal = {Proc. R. Soc. A},
  volume = {473},
  number = {2200},
  pages = {20160721},
  year = {2017},
  doi = {10.1098/rspa.2016.0721},
}

@article{Sekatski2018,
  author = {Sekatski, Pavel and Bancal, Jean-Daniel and Wagner, Sebastian and Sangouard, Nicolas},
  title = {Certifying the building blocks of quantum computers from {Bell}'s theorem},
  journal = {Phys. Rev. Lett.},
  volume = {121},
  number = {18},
  pages = {180505},
  year = {2018},
  doi = {10.1103/PhysRevLett.121.180505},
}

@article{Wagner2020,
  author = {Wagner, Sebastian and Bancal, Jean-Daniel and Sangouard, Nicolas and Sekatski, Pavel},
  title = {Device-independent characterization of quantum instruments},
  journal = {Quantum},
  volume = {4},
  pages = {243},
  year = {2020},
  doi = {10.22331/q-2020-03-19-243},
}

@article{Sekatski2023,
  author = {Sekatski, Pavel and Bancal, Jean-Daniel and Ioannou, Marie and Afzelius, Mikael and Brunner, Nicolas},
  title = {Toward the device-independent certification of a quantum memory},
  journal = {Phys. Rev. Lett.},
  volume = {131},
  number = {17},
  pages = {170802},
  year = {2023},
  doi = {10.1103/PhysRevLett.131.170802},
}

@article{Bock2024,
  author = {Bock, Matthias and Sekatski, Pavel and Bancal, Jean-Daniel and Kucera, Stephan and Bauer, Tobias and Sangouard, Nicolas and Becher, Christoph and Eschner, J{\"u}rgen},
  title = {Calibration-independent bound on the unitarity of a quantum channel with application to a frequency converter},
  journal = {npj Quantum Inf.},
  volume = {10},
  pages = {63},
  year = {2024},
  doi = {10.1038/s41534-024-00859-0}
}

@article{Zeng2020,
  author = {Zeng, Pei and Zhou, You and Liu, Zhenhuan},
  title = {Quantum gate verification and its application in property testing},
  journal = {Phys. Rev. Res.},
  volume = {2},
  number = {2},
  pages = {023306},
  year = {2020},
  doi = {10.1103/PhysRevResearch.2.023306},
}

@article{ZhuZhang2020,
  author = {Zhu, Huangjun and Zhang, Haoyu},
  title = {Efficient verification of quantum gates with local operations},
  journal = {Phys. Rev. A},
  volume = {101},
  number = {4},
  pages = {042316},
  year = {2020},
  doi = {10.1103/PhysRevA.101.042316},
}

@article{Zhang2022,
  author = {Zhang, Rui-Qi and Hou, Zhibo and Tang, Jun-Feng and Shang, Jiangwei and Zhu, Huangjun and Xiang, Guo-Yong and Li, Chuan-Feng and Guo, Guang-Can},
  title = {Efficient experimental verification of quantum gates with local operations},
  journal = {Phys. Rev. Lett.},
  volume = {128},
  number = {2},
  pages = {020502},
  year = {2022},
  doi = {10.1103/PhysRevLett.128.020502}
}

@article{Chiribella2008,
  author = {Chiribella, Giulio and D'Ariano, Giacomo Mauro and Perinotti, Paolo},
  title = {Transforming quantum operations: quantum supermaps},
  journal = {Europhys. Lett.},
  volume = {83},
  number = {3},
  pages = {30004},
  year = {2008},
  doi = {10.1209/0295-5075/83/30004},
}

@article{Sarkar2026,
  author = {Sarkar, Shubhayan},
  title = {Any Unitary Gate Can Be Certified Device-Independently in a Quantum Network},
  journal = {Phys. Rev. Lett.},
  volume = {137},
  number = {3},
  pages = {030802},
  year = {2026},
  doi = {10.1103/m1tx-9mx1},
}

@misc{Barizien2026,
  author = {Barizien, Victor and Branciard, Cyril and Abbott, Alastair A. and Bancal, Jean-Daniel and Sekatski, Pavel},
  title = {Self-testing quantum supermaps},
  year = {2026},
  eprint = {2606.25124},
  archiveprefix = {arXiv},
  primaryclass = {quant-ph},
  doi = {10.48550/arXiv.2606.25124}
}

@article{Renou2018,
  author = {Renou, Marc-Olivier and Kaniewski, J{\k{e}}drzej and Brunner, Nicolas},
  title = {Self-testing entangled measurements in quantum networks},
  journal = {Phys. Rev. Lett.},
  volume = {121},
  number = {25},
  pages = {250507},
  year = {2018},
  doi = {10.1103/PhysRevLett.121.250507},
}

@article{Bancal2018,
  author = {Bancal, Jean-Daniel and Sangouard, Nicolas and Sekatski, Pavel},
  title = {Noise-resistant device-independent certification of {Bell}-state measurements},
  journal = {Phys. Rev. Lett.},
  volume = {121},
  number = {25},
  pages = {250506},
  year = {2018},
  doi = {10.1103/PhysRevLett.121.250506},
}

@article{Supic2023,
  author = {{\v{S}}upi{\'c}, Ivan and Bowles, Joseph and Renou, Marc-Olivier and Ac{\'i}n, Antonio and Hoban, Matty J.},
  title = {Quantum networks self-test all entangled states},
  journal = {Nat. Phys.},
  volume = {19},
  number = {5},
  pages = {670--675},
  year = {2023},
  doi = {10.1038/s41567-023-01945-4},
}

@article{Tavakoli2022,
  author = {Tavakoli, Armin and Pozas-Kerstjens, Alejandro and Luo, Ming-Xing and Renou, Marc-Olivier},
  title = {{Bell} nonlocality in networks},
  journal = {Rep. Prog. Phys.},
  volume = {85},
  number = {5},
  pages = {056001},
  year = {2022},
  doi = {10.1088/1361-6633/ac41bb},
}

@article{Rossi2013,
  author = {Rossi, Matteo and Huber, Marcus and Bru{\ss}, Dagmar and Macchiavello, Chiara},
  title = {Quantum hypergraph states},
  journal = {New J. Phys.},
  volume = {15},
  number = {11},
  pages = {113022},
  year = {2013},
  doi = {10.1088/1367-2630/15/11/113022},
}

@article{Gachechiladze2016,
  author = {Gachechiladze, Mariami and Budroni, Costantino and G{\"u}hne, Otfried},
  title = {Extreme violation of local realism in quantum hypergraph states},
  journal = {Phys. Rev. Lett.},
  volume = {116},
  number = {7},
  pages = {070401},
  year = {2016},
  doi = {10.1103/PhysRevLett.116.070401},
}

@article{ZhuHayashi2019,
  author = {Zhu, Huangjun and Hayashi, Masahito},
  title = {Efficient verification of hypergraph states},
  journal = {Phys. Rev. Appl.},
  volume = {12},
  number = {5},
  pages = {054047},
  year = {2019},
  doi = {10.1103/PhysRevApplied.12.054047},
}

@article{Takeuchi2019,
  author = {Takeuchi, Yuki and Morimae, Tomoyuki and Hayashi, Masahito},
  title = {Quantum computational universality of hypergraph states with {Pauli-X} and {Z}-basis measurements},
  journal = {Sci. Rep.},
  volume = {9},
  pages = {13585},
  year = {2019},
  doi = {10.1038/s41598-019-49968-3},
}

@article{MillerMiyake2016,
  author = {Miller, Jacob and Miyake, Akimasa},
  title = {Hierarchy of universal entanglement in two-dimensional measurement-based quantum computation},
  journal = {npj Quantum Inf.},
  volume = {2},
  pages = {16036},
  year = {2016},
  doi = {10.1038/npjqi.2016.36},
}

@article{ChenYanZhou2024,
  author = {Chen, Junjie and Yan, Yuxuan and Zhou, You},
  title = {Magic of quantum hypergraph states},
  journal = {Quantum},
  volume = {8},
  pages = {1351},
  year = {2024},
  doi = {10.22331/q-2024-05-21-1351},
}

@article{Jones2013,
  author = {Jones, Cody},
  title = {Low-overhead constructions for the fault-tolerant {Toffoli} gate},
  journal = {Phys. Rev. A},
  volume = {87},
  number = {2},
  pages = {022328},
  year = {2013},
  doi = {10.1103/PhysRevA.87.022328},
}

@article{Eastin2013,
  author = {Eastin, Bryan},
  title = {Distilling one-qubit magic states into {Toffoli} states},
  journal = {Phys. Rev. A},
  volume = {87},
  number = {3},
  pages = {032321},
  year = {2013},
  doi = {10.1103/PhysRevA.87.032321},
}

@misc{BooleanPhaseCode2026,
  author       = {Han, Yunguang},
  year         = {2026},
    howpublished = {GitHub},
note = {Available at \url {https://github.com/asstger/boolean-phase-gate-certification}}
}

@article{Huang2024,
  author  = {Huang, Jieshan and Li, Xudong and Chen, Xiaojiong and Zhai, Chonghao and Zheng, Yun and Chi, Yulin and Li, Yan and He, Qiongyi and Gong, Qihuang and Wang, Jianwei},
  title   = {Demonstration of hypergraph-state quantum information processing},
  journal = {Nature Communications},
  volume  = {15},
  number  = {1},
  pages   = {2601},
  year    = {2024},
  doi     = {10.1038/s41467-024-46830-7}
}

@misc{Han2026CCZState,
  author = {Han, Yunguang and Kuang, Xin and Bu, Xingyuan and Wang, Yukun},
  title = {Device-Independent Self-Testing of the Three-Qubit {CCZ} Hypergraph State},
  year = {2026},
  eprint = {2607.21288},
  archiveprefix = {arXiv},
  primaryclass = {quant-ph},
  doi = {10.48550/arXiv.2607.21288}
}

\end{document}